\newcommand{\Arr}{\ensuremath{A}\xspace} %
\newcommand{\CurArr}{\ensuremath{\mathcal{A}\xspace}} %
\newcommand{\CurX}{\ensuremath{\mathcal{X}\xspace}} %
\newcommand{\CurL}{\ensuremath{\mathcal{L}\xspace}} %
\newcommand{\CurO}{\ensuremath{\mathcal{O}\xspace}} %
\newcommand{\CurQ}{\ensuremath{\mathcal{Q}\xspace}} %
\newcommand{\CurH}{\ensuremath{\mathcal{H}\xspace}} %
\newcommand{\Cell}{\ensuremath{\mathcal{C}\xspace}} %
\newcommand{\LX}{\ensuremath{X}\xspace} %
\newcommand{\LO}{\ensuremath{O}\xspace} %
\newcommand{\LQ}{\ensuremath{Q}\xspace} %
\newcommand{\N}{\ensuremath{\mathbb{N}}\xspace}
\newcommand{\cupdot}{\mathbin{\mathaccent\cdot\cup}}
\DeclareMathOperator{\sentinel}{\bot} 
\newcommand{\myparagraph}[1]{\noindent\textbf{#1.}\,}
\title{Towards a characterization of stretchable aligned graphs\thanks{Work was partially supported by grant RU
1903/3-1 of the German Research Foundation(DFG).}}
\author{
Marcel Radermacher\inst{1}
\and Ignaz Rutter\inst{2} 
\and Peter Stumpf\inst{2}
}
\institute{Department of Informatics, Karlsruhe Institute of
Technology (KIT), Germany
\and Faculty of Computer Science and Mathematics, University of
Passau, Germany
\email{radermacher@kit.edu},
\email{\{rutter,stumpf\}@fim.uni-passau.de}
}
\begin{document}
\maketitle

\begin{abstract}
	We consider the problem of stretching pseudolines in a planar straight-line
	drawing to straight lines while preserving the straightness and the
	combinatorial embedding of the drawing.  
	We answer open questions by Mchedlidze  et
	al.~\cite{DBLP:journals/jgaa/MchedlidzeRR18} by showing that not all instances
	with two pseudolines are stretchable. On the positive side, for $k\geq 2$
	pseudolines intersecting in a single point, we prove that in case that some
	edge-pseudoline intersection-patterns are forbidden, all instances are
	stretchable.
	For intersection-free pseudoline arrangements we show that every aligned graph
	has an aligned drawing.
	This considerably reduces the gap between stretchable and non-stretchable
	instances.
\end{abstract}


\section{Introduction}
\label{ch:introduction}

\begin{figure}[t]
	\centering
	\subfloat[\label{fig:oneline}]{ \includegraphics{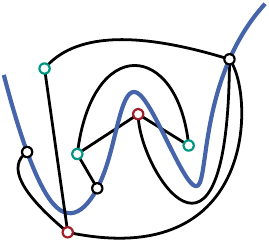} }
	\subfloat[\label{fig:pappus}]{ \includegraphics{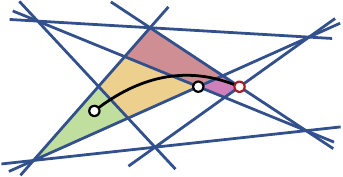} }
	\quad
	\subfloat[\label{fig:alignment_complexity:1}]{\includegraphics[page=1]{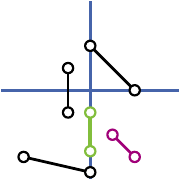}}
	\quad
	\subfloat[\label{fig:alignment_complexity:2}]{ \includegraphics[page=2]{figures/alignment_complexity.pdf} }

	\caption{
		(a) An aligned graph on one (blue) pseudoline. The color indicates the
		vertex partition $L\cup R\cup S$.
		(b) Aligned graph of alignment
		complexity $(\bot,3,\bot)$ that does not have an aligned
		drawing~\cite{DBLP:journals/jgaa/MchedlidzeRR18}.
		(c) Allowed
		types of edges in aligned graphs of alignment complexity $(1,0,0)$. The
		green edge is aligned. The purple edge is free. (d) Aligned graph of
		alignment complexity $(2,1, \bot)$.   }
\end{figure}

Every planar graph $G=(V,E)$ has a straight-line
drawing~\cite{MR26311,tutte1963draw}.  In a restricted setting one seeks a
drawing of $G$ that obeys given constraints,
%
e.g., Biedl et al.~\cite{Biedl1998,Biedl:1998:DPP:276884.276917} studied whether
a bipartite planar graph has a drawing where the two sets of the partitions can
be separated by a straight line. Da Lozzo et
al.~\cite{DBLP:journals/jocg/LozzoDFMR18} generalized this result and
characterized the planar graphs with a partition $L \cup R \cup S = V$ of the
vertex set that have a planar straight-line drawing such that the vertices in
$L$ and $R$ lie left and right of a common line $l$, respectively, and the
vertices in $S$ lie on $l$;  refer to Fig.~\ref{fig:oneline}.  In this case $S$ is called \emph{collinear}. In
particular, they showed that $S$ is collinear if and only if there is a drawing
of $G$ such that there is an open simple curve $\mathcal P$ that starts and ends
in the outer face of $G$, separates $L$ from $R$, collects all vertices in $S$
and that either entirely contains or intersects at most once each edge.
%
We refer to $\mathcal P$ as a \emph{pseudoline with respect to~$G$}.

Dujmovic et al.~\cite{DBLP:conf/soda/DujmovicFGMR19}
proved the following surprising result: If $S$ is a collinear set, then for
every point set $P$ with $|S| = |P|$ there is a straight-line drawing $\Gamma$
of $G$ such that $S$ is mapped to $P$. Another recent research stream considers
the problem of drawing all vertices on as few lines as
possible~\cite{DBLP:conf/gd/ChaplickFLRVW16}.
Eppstein~\cite{DBLP:journals/corr/abs-1903-05256} proved that for every integer
$l$ there is a cubic planar graph graph $G$ with $O(l^3)$ vertices such that not
all vertices of $G$ can lie on $l$ lines.

Mchedlidze et al.~\cite{DBLP:journals/jgaa/MchedlidzeRR18} generalized the
concept of a single pseudoline with respect to an embedded graph 
to an arrangements of pseudolines and introduced the notion of
\emph{aligned graphs}, i.e, a pair $(G, \mathcal A)$ where $G$ is a planar
embedded graph and $\CurArr = \{\CurL_1, \dots, \CurL_k\}$ is a set of
pseudolines $\CurL_i$ with respect to $G$
that intersect pairwise at most once. 
We cite the original definition of aligned
drawings~\cite{DBLP:journals/jgaa/MchedlidzeRR18}.  A tuple $(\Gamma,A)$ of a
(straight-line) drawing $\Gamma$ of $G$ and line arrangement $A$ is an
\emph{aligned drawing of $(G, \CurArr)$} if and only if the arrangement of the
union of $\Gamma$ and $A$ has same combinatorial properties as the union of $G$
and $\CurArr$. In the following, we specify these combinatorial properties. Let
$A = \{L_1, L_2, \dots, L_k\}$, i.e., line $L_i$ corresponds to pseudoline
$\CurL_i$.
A (pseudo)-line arrangement divides the plane into a set of \emph{cells}
$\Cell_1, \Cell_2, \dots, \Cell_\ell$. If $A$ is homeomorphic to $\CurArr$, then
there is a bijection $\phi$ between the cells of $\CurArr$ and the cells of
$\Arr$.
If  $(\Gamma, A)$ is an aligned drawing of $(G, \CurArr$), then it has the
following properties:
\begin{inparaenum}[(i)] \item the arrangement of $A$ is homeomorphic to the
		arrangement of $\CurArr$,
	\item $\Gamma$ is a straight-line drawing homeomorphic to the planar embedding of $G$,
	\item the intersection of each vertex~$v$ and each edge $e$ with a cell
		$\Cell$ of $\CurArr$ is non-empty if and only if the intersection of $v$ and
		$e$ with $\phi(\Cell)$ in $(\Gamma, \Arr)$, respectively, is non-empty,
	\item if an edge $uv$ (directed from $u$ to $v$) intersects a sequence of
		cells $\Cell_1, \Cell_2, \dots, \Cell_r$ in this order, then $uv$ intersects
		in $(\Gamma, \Arr)$ the cells $\phi(\Cell_1), \phi(\Cell_2), \dots,
		\phi(\Cell_r)$ in this order, and 
	\item each line $L_i$ intersects in $\Gamma$ the same vertices and edges as
		$\CurL_i$ in $G$, and it does so in the same order.
\end{inparaenum}

Mchedlidze et al. observed that not every aligned graph has an aligned drawing.
For example, the modification of the Pappus configuration in
Fig.~\ref{fig:pappus} does not have an aligned drawing. Note that one endpoint
of the edge is \emph{anchored} on some pseudolines and that the edge
\emph{crosses} three pseudolines.  Hence, Mchedldize et al.  studied a
restricted subclass of aligned graphs that only contains edges $uv$ that are
either (see Fig.~\ref{fig:alignment_complexity:1} and
Fig.~\ref{fig:alignment_complexity:2})
\begin{itemize}
	\item \emph{free}, i.e, the entire edge $uv$ is in a single cell,
	\item \emph{aligned}, i.e., the entire edge $uv$ is on a single pseudoline,
	\item \emph{one-sided anchored}, i.e., $u$ or $v$ is on a pseudoline but not
		both, and $uv$ does not cross a pseudoline,
	\item \emph{1-crossed}, i.e., $u$ and $v$ are in the interior of a cell and
		$uv$ crosses one pseudoline.
\end{itemize}

For this restricted class Mchedlidze et al. proved that every aligned graph has
an aligned drawing. For this purpose they reduced their instances to aligned
graphs that do neither have free edges nor aligned edges nor separating
triangles. Then the original instance has an aligned drawing if the reduced
instance has an aligned drawing. Thus, the key to success is to characterize the
reduced instances and to prove that every reduced instance has an aligned drawing.
In the reduced setting, Mchedlidze et al. were able to show that each cell of
the pseudoline arrangement contains at-most a single vertex. Since the union of
two adjacent cells in the line arrangement is convex, any placement of the
vertices that respects the ordering constraints along the lines induces a valid
aligned drawing of the reduced aligned graph.  If we additionally allow
\emph{two-sided anchored edges}, i.e., edges where both endpoints are on
pseudolines but that do not cross a pseudoline, then it is possible to construct
a family of aligned graphs such that each cell can contain a number of vertices
that is not bounded by the number of pseudolines.

\myparagraph{Contribution} We show that every aligned graph on $k\geq 2$
pseudolines intersecting in a single point with free, aligned, one-sided and
two-sided anchored, and 1-crossed edges has an aligned drawing.  If we allow an
additional edge type, we show that there is an aligned graph on two pseudolines
that does not have an aligned drawing. Note that in the example given in
Fig.~\ref{fig:pappus}, no point in the green cell is visible from the red vertex
within the polygon defined by union of the (colored) cells traversed by the
edge. Hence, this instance trivially does not admit an aligned drawing. In
contrast, each edge in Fig.~\ref{fig:counterexample} can be drawn independently
as a straight-line segment. We show that the entire instance does not admit a
straight-line drawing. 
Further, we show that every aligned graph $(G, \CurArr)$ has an aligned drawing,
if $\CurArr$ does not have crossings, i.e., $\CurArr$ corresponds to an
arrangement $\Arr$ of parallel lines. This couples aligned graphs to
hierarchical (level) graphs.
This significantly narrows the gap in the characterization of realizable and
non-realizable aligned graphs.

\section{Preliminaries}
\label{sec:preliminaries}

\begin{figure}
	\centering
	\subfloat[] {
		\includegraphics[page=2]{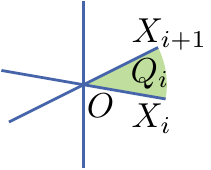}
	}
	\subfloat[] {
		\includegraphics[page=1]{figures/k_star.pdf}
	}
	\caption{(a,b) (Pesudo)-line arrangements of a $3$-star aligned graph. The
	green region indicates a cell.}
	\label{fig:k_star}	
\end{figure}

We first introduce some notation used for aligned graphs on $k$
pseudolines intersecting in a single point.  Let $\mathcal O$ be a point called
the \emph{origin}.  Let $\CurX = \{\CurX_1, \CurX_2, \dots, \CurX_k\}$ be a
pseudoline arrangement where the pseudolines pairwise intersect in $\mathcal O$;
refer to Fig.~\ref{fig:k_star}. We refer to an aligned graph $(G, \CurX)$ as a
\emph{$k$-star aligned graph}.  Correspondingly, we refer to $(\Gamma, \LX)$,
with $X= \{X_1, X_2, \dots, X_k\}$ as an aligned drawing of $(G, \CurX)$, where
the lines in $X$ pairwise intersect in the \emph{origin} $O$.
The curves in $\CurX$ divide the plane into a set of cells $\CurQ_1, \dots ,
\CurQ_{2k}$ in counterclockwise order.  These cells naturally correspond to the
regions $\LQ_1, \dots, \LQ_{2k}$ bounded by the lines in $\LX$.

We refer to an edge (vertex) as \emph{free} if it is entirely in the interior of
a cell. An \emph{aligned edge (vertex)} is entirely on a pseudoline. For each
\emph{$l$-crossed} edge $e$ there are $l$ but not $l+1$ pseudolines that
intersect $e$ in its interior. An edge $e$ is \emph{$i$-anchored} if $i$ of its
endpoints lie on $i$ distinct pseudolines.
Mchedlidze et al. used a triple $(l_0, l_1, l_2)$, with $l_i \in \N \cup
\{\sentinel\}$ to describe the complexity of an aligned graph $(G, \CurArr)$.
Let $E_i$ be the set of $i$-anchored edges; note that, the set of edges is the
disjoint union $E_0 \cupdot E_1 \cupdot E_2$.  A non-empty edge set $A \subset
E$ is $l$-crossed if $l$ is the smallest number such that every edge in $A$ is
at most $l$-crossed. An aligned graph $(G, \CurArr)$ has alignment complexity
$(l_0, l_1, l_2)$, if $E_i$ is at most $l_i$-crossed or has to be empty, if $l_i
= \sentinel$. In particular, Mchedlidze et al. proved that every aligned graph
of alignment complexity $(1,0,\sentinel)$ has an aligned drawing. Our results
can be restated as that every $2$-star aligned graph of alignment
complexity $(1,0,0)$ has an aligned drawing. Further, there is an aligned graph
of alignment complexity $(\sentinel, 1,\sentinel)$ that does not have an aligned
drawing. 
 


\section{Star aligned graphs}

\begin{figure}[tb]
  \centering
	\subfloat[\label{fig:counterexample}
]{
		\includegraphics[page=2]{./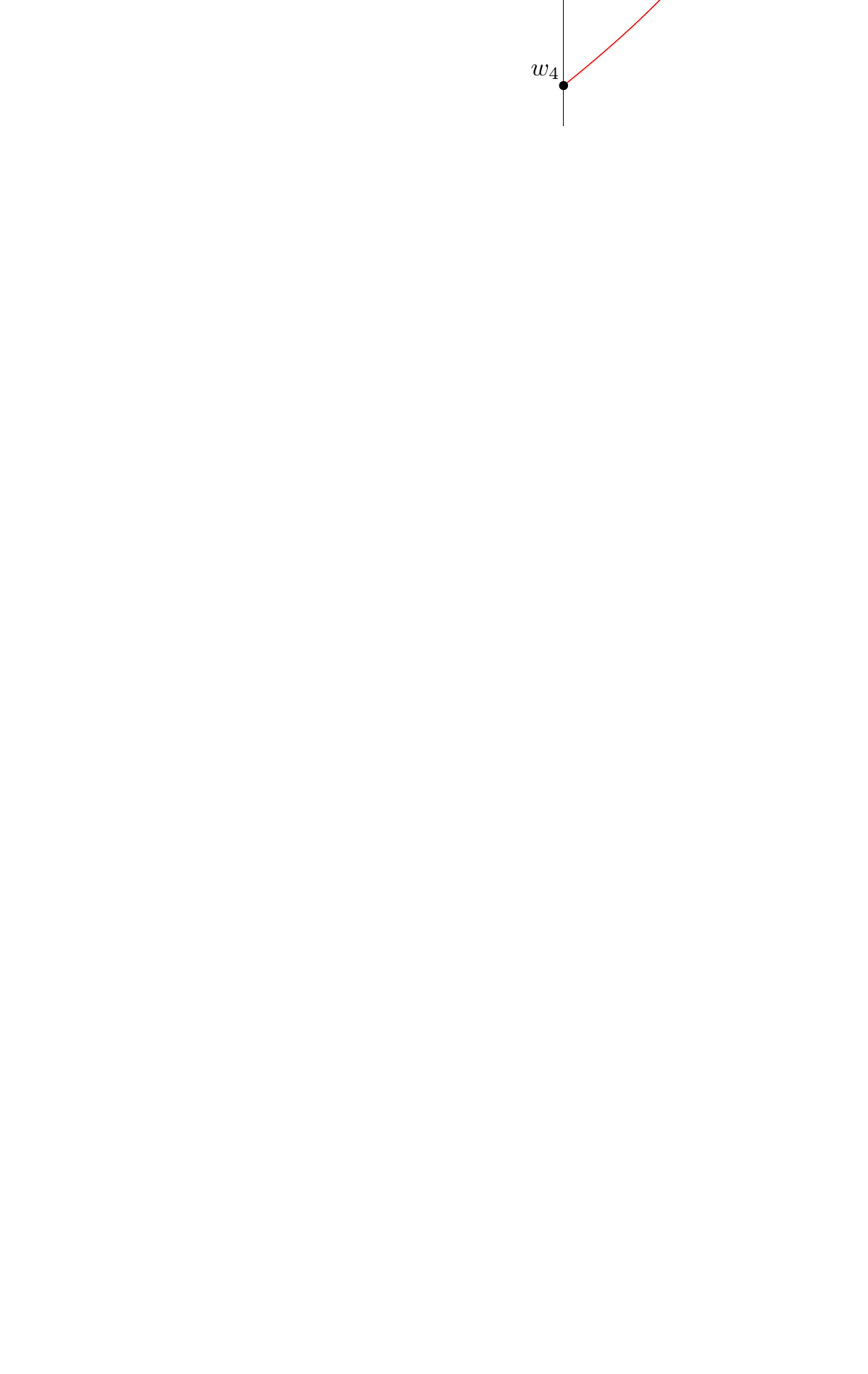}
	}
	\quad
	\subfloat[\label{fig:inequality}]{
  
		\includegraphics[page=3]{./figures/conterexample2.pdf}
	}
	\caption{(a)~A $2$-aligned graph that does not have an aligned drawing.
		(b)~We have $\lambda_1/\lambda_2 = \tan(\alpha) < \tan(\beta) =
		{|y_1|}/({\lambda_2+|x_1|})$.}
 \end{figure}

In this section, we study whether $k$-star aligned graphs have aligned drawings. We first
prove that the $2$-star aligned graph in Fig.~\ref{fig:counterexample} does not have
an aligned drawing.

\begin{theorem}\label{theorem:counterexample}
	%
	There is a $2$-star aligned graph of alignment complexity $(\bot,1,\bot)$ that does not have
	an aligned drawing.
\end{theorem}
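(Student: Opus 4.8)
\begin{sketch}
The plan is to take the $2$-star aligned graph $(G,\CurX)$ of Fig.~\ref{fig:counterexample} as the witness and to show that it admits no aligned drawing. The easy part is the complexity bound: in $(G,\CurX)$ every edge has exactly one endpoint on a pseudoline and crosses exactly one pseudoline, so $E_0=E_2=\emptyset$ and $E_1$ is $1$-crossed, i.e.\ the alignment complexity is $(\sentinel,1,\sentinel)$. Structurally, $G$ consists of a constant number of anchored vertices distributed over the four rays of the two pseudolines $\CurX_1$ and $\CurX_2$ together with a single free vertex, and every edge leaves an anchored vertex, crosses one pseudoline, and enters the neighbouring cell. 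As emphasised in the introduction, each edge on its own can be drawn as a straight segment consistently with the arrangement, so the obstruction to be exhibited is global.

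Assume, for contradiction, that $(\Gamma,\LX)$ is an aligned drawing. Since affine maps preserve straightness, the planar embedding, and the combinatorial type of a line arrangement, we may assume $\LO=(0,0)$, that $X_1$ is the $x$-axis and $X_2$ the $y$-axis; the regions $\LQ_1,\dots,\LQ_4$ are then the four open quadrants, matched to the cells $\CurQ_1,\dots,\CurQ_4$ of $\CurX$ via the homeomorphism $\phi$. Properties~(iii) and~(v) of an aligned drawing now fix, for each anchored vertex, the sign of its one nonzero coordinate, for the free vertex the quadrant containing it, and for each edge both the axis it crosses and the position of its crossing point relative to the other crossings along that axis. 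Introducing a scalar variable for each unknown coordinate turns all of this into a system of strict linear inequalities; the only freedom left is an independent positive rescaling of the two axes, under which the system is homogeneous of degree $0$ and hence can be ignored.

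The heart of the argument is to read slope comparisons out of this system. For a $1$-crossed edge joining an anchored vertex $p$ to a vertex $q$ in an adjacent cell, the requirement that the segment $pq$ meets the prescribed axis at the prescribed position is exactly a comparison between the slope of $pq$ and the slope of a line through $\LO$. Chasing these comparisons around the cyclic structure of $G$ yields the relation $\lambda_1/\lambda_2=\tan\alpha<\tan\beta=|y_1|/(\lambda_2+|x_1|)$ of Fig.~\ref{fig:inequality}, where $\lambda_1,\lambda_2>0$ measure distances of anchored vertices from $\LO$ and $(x_1,y_1)$ is the position of the free vertex. The gadget is built so that a second, parallel chain of comparisons on the remaining edges forces the opposite strict inequality, and the two cannot hold simultaneously. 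Hence $(G,\CurX)$ has no aligned drawing, and since its alignment complexity is $(\sentinel,1,\sentinel)$, the theorem follows.

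I expect the genuinely delicate part to be the design of $G$ rather than the subsequent bookkeeping. One must choose the anchored vertices, the free vertex, the edge set, and the planar embedding so that (a)~every edge lies in the $1$-anchored, $1$-crossed class, (b)~the embedding is realizable edge by edge, so that only the global interaction of the edges can fail, and (c)~the resulting chains of slope comparisons close up with no slack, i.e.\ together they force a strict inequality and its reverse. Conditions (a) and (b) are immediate from the picture; condition (c) is where the gadget has to be chosen exactly right, and it is the step I would verify most carefully.
\end{sketch}
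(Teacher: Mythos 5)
Your overall strategy coincides with the paper's: take the graph of Fig.~\ref{fig:counterexample} as the witness, normalize so that the two lines are the coordinate axes, and turn the requirement that each $1$-anchored $1$-crossed edge meet its axis on the correct side of the relevant anchored vertex into a strict slope inequality; your displayed relation $\lambda_1/\lambda_2=\tan\alpha<\tan\beta=|y_1|/(\lambda_2+|x_1|)$ is exactly the paper's first estimate. However, there are two genuine gaps. First, you misdescribe the gadget: it does not have ``a single free vertex'' but four free vertices $v_1,\dots,v_4$, one in each quadrant, together with anchored vertices $u_i$ and $w_i$ on the rays, and the cyclic structure over all four quadrants is essential---with only one free vertex the inequalities you collect do not close up into anything contradictory. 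Second, and more importantly, the closing step that you explicitly defer (``a second, parallel chain of comparisons forces the opposite strict inequality'') is not how the contradiction actually arises, and as stated it is not obviously achievable: a single linear inequality and its reverse cannot both be forced by a system that is realizable edge by edge without some cancellation mechanism.

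The paper's mechanism is multiplicative, not a pair of opposite linear chains. From the slope comparison one gets $|x_i|<(\lambda_{i+1}/\lambda_i)\,|y_i|$ for $i=1,3$ and $|y_i|<(\lambda_{i+1}/\lambda_i)\,|x_i|$ for $i=2,4$; multiplying all four makes the factor $\lambda_2\lambda_3\lambda_4\lambda_1/(\lambda_1\lambda_2\lambda_3\lambda_4)$ telescope to $1$, which is why the unknown distances $\lambda_i$ disappear. A second family of inequalities, coming from the straightness of the edges $v_{i+1}w_i$ (which you do not mention and which your single-free-vertex description cannot supply), gives $|y_i|<|y_{i+1}|$ for $i=1,3$ and $|x_i|<|x_{i+1}|$ for $i=2,4$; multiplying these into the product yields
\[
|x_1|\cdot|y_2|\cdot|x_3|\cdot|y_4| \;<\; |y_1|\cdot|x_2|\cdot|y_3|\cdot|x_4| \;<\; |x_1|\cdot|y_2|\cdot|x_3|\cdot|y_4|,
\]
a quantity strictly less than itself. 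So the ``design of the gadget'' that you flag as the delicate part is indeed where your sketch is incomplete: you need one free vertex per quadrant, the extra anchored vertices $w_i$ providing the second family of inequalities, and the observation that the product of the ratios $\lambda_{i+1}/\lambda_i$ around the cycle equals $1$.
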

\begin{proof}
	Assume that the aligned graph in Fig.~\ref{fig:counterexample} has an aligned
	drawing.
	For $i=1,\dots, 4,5$ with $1=5$, let $(x_i,y_i)$ be the point for $v_i$, let
	$\lambda_i$ be the distance of $u_i$ to the origin $\LO$.
	Since $u_2v_1$ intersects the
	$y$-axis above $u_1$, edge $u_2v_1$ has a steeper slope than the segment
	$u_2u_1$; see Fig.~\ref{fig:inequality}.
	We obtain
	$\lambda_1/\lambda_2 < |y_1|/(\lambda_2+|x_1|)$
 and therefore 
    $|x_1| < \lambda_2 / \lambda_1 \cdot |y_1|$.
		Analogously, we obtain 


	\begin{equation}
		{|x_i|}<\frac{\lambda_{i+1}}{\lambda_i}\cdot{|y_i|},i=1,3 \qquad
		{|y_i|}<\frac{\lambda_{i+1}}{\lambda_i}\cdot{|x_i|},i=2,4
    \label{eq:inner}.
  \end{equation}

	Since $v_{i+1}w_i$ are embedded as straight
	lines, we further obtain estimation (2) that $|y_i| < |y_{i+1}|$ for $i=1,3$ and $|x_i| <
	|x_{i+1}|$ for $i=2,4$.
  By multiplying the left and the right sides we obtain $|x_1|\cdot |y_2|\cdot |x_3|\cdot |y_4|
	\overset{\eqref{eq:inner}}{<} |y_1|\cdot |x_2|\cdot |y_3|\cdot
  |x_4|\cdot\frac{\lambda_2\lambda_3\lambda_4\lambda_1}{\lambda_1\lambda_2\lambda_3\lambda_4}= |y_1|\cdot |x_2|\cdot |y_3|\cdot
	|x_4|\overset{(2)}{<}|y_2|\cdot |x_3|\cdot |y_4|\cdot |x_1|$. A contradiction.
\end{proof}


\subsection{Aligned drawings of counterclockwise star aligned graphs}

We now consider aligned drawings of $k$-star aligned graphs $(G, \CurArr$) for $k\geq 2$.
Recall that the aligned graph in Figure~\ref{aligned:fig:counterexample_2}
does not have an aligned drawing. The crux is that the source of the red edges
are free and the source of green edges are aligned.  In the following we
introduce so-called \emph{counterclockwise aligned graphs} and show that they
have aligned drawings. 

\begin{figure}[tb]
	\subfloat[\label{aligned:fig:counterexample_2}]{
		\includegraphics[page=4]{figures/conterexample2.pdf}
	}
	\quad
	\subfloat[]{
		\includegraphics[page=2]{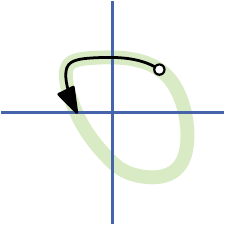}
	}
	\quad
	\subfloat[]{
		\includegraphics[page=1]{figures/ccw.pdf}
	}
	\quad
	\subfloat[\label{aligned:fig:def:comb}]{
		\includegraphics{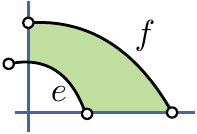}
	}
	\caption{(a)~This $2$-aligned graph does not have an aligned drawing. (b,c)~The green curve indicates the Jordan curve that completes the
		black edge. The edge in (b) is an edge of a ccw-aligned graph. The edge depicted
		in (c) is forbidden in ccw-aligned graphs.
	(d)~A comb of edges $e,f$.}
\end{figure}

We orient each non-aligned edge $uv$ of an aligned graph $(G, \CurX)$ such that
it can be extended to a Jordan curve, i.e., a closed simple curve, $\mathcal
C_{uv}$ with the property that it intersects  each pseudoline exactly twice and
has the origin to its left.  A \emph{counterclockwise aligned (ccw-aligned)}
graph is a $k$-star aligned graph of alignment complexity $(1,1,0)$ whose orientation
does not contain $1$-anchored $1$-crossed edges with a free source vertex. 

We prove that every ccw-aligned graph has an aligned drawing.
To prove this statement we follow the same proof strategy as Mchedlidze et al.
In particular, we have to augment our aligned graph to a particular ccw-aligned
triangulation.  Further, we use that for each
aligned graph $(G, \CurX)$ there is a \emph{reduced aligned graph} $(G_R,
\CurX)$ (i.e., it does neither contain
\begin{inparaenum}[(i)]
	\item separating triangles, nor 
	\item free edges, nor
	\item aligned edges that are not incident to the origin $\mathcal O$)
\end{inparaenum}
with the property that $(G, \CurX)$ has an aligned drawing if $(G_R, \CurX)$ has
an aligned drawing;  see Lemma~\ref{lemma:reduced}. In contrast to aligned graphs of alignment complexity
$(1,0,\sentinel)$ the size of $(G_R, \CurX)$ is not bounded by a constant. The
aim of Lemma~\ref{lemma:base_case:structure} and Lemma~\ref{lem:closedEmpty} is
to describe the structure of the reduced instances. This helps
to prove Lemma~\ref{lemma:base_case:drawing} that states that each reduced
instance has an aligned drawing. 

We first introduce further notations.
A $k$-star aligned graph $(G,
\CurX)$ is a \emph{proper $k$-star aligned triangulation} if each inner face is a
triangle, the boundary of the outer face is a $2k$-cycle of $2$-anchored edges,
the outer face does not contain the origin and there is a degree-$2k$ vertex $o$
on the origin incident to $2k$ aligned edges.  We refer to a reduced proper
ccw-aligned triangulation as a \emph{reduced aligned triangulation}.
We refer to $1$-anchored $1$-crossed and $2$-anchored edges as
\emph{separating}. The region within a cell that is bounded by two
separating edges $e$ and $f$ is an \emph{edge region} (Fig.~\ref{aligned:fig:def:comb}).  An inclusion-minimal edge region is a
\emph{comb}.

The following lemma is a consequence from the results by Mchedlitze et
al.~\cite{DBLP:journals/jgaa/MchedlidzeRR18}. For further details we refer to
the Appendix.

\begin{lemma}\label{lemma:reduced}
	Every $k$-star aligned graph has an aligned drawing, if every reduced $k$-star
	aligned triangulation has an aligned drawing.
\end{lemma}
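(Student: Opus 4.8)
The plan is to show how an arbitrary $k$-star aligned graph $(G,\CurX)$ can be transformed, in a sequence of embedding-preserving and alignment-preserving operations, into a reduced $k$-star aligned triangulation $(G_R,\CurX)$, such that an aligned drawing of $(G_R,\CurX)$ can be ``undone'' back to an aligned drawing of $(G,\CurX)$. Since the statement is attributed to Mchedlidze et al.\ and only its adaptation to the $k$-star setting is claimed, I would organize the argument as a pipeline of reductions, each phrased as: given an aligned graph with property $P$, produce a strictly smaller (or simpler, in a suitable potential) aligned graph without $P$, or add vertices/edges while keeping the alignment complexity and the ccw-property, so that an aligned drawing of the new instance restricts/projects to one of the old one.

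First I would add the frame: place a degree-$2k$ vertex $o$ at the origin with $2k$ aligned edges (one per ray of each $\CurX_i$), and enclose everything in a $2k$-cycle of $2$-anchored edges that avoids the origin, subdividing pseudolines by new aligned vertices as needed; this is safe because any aligned drawing of the augmented instance contains a drawing of the original $G$ with the same embedding and the same line arrangement incidences. Next I would triangulate all inner faces by adding edges that lie within a single cell (hence free edges of a cell, which keeps complexity $(1,1,0)$ and the ccw-orientation property, since newly added edges can be oriented to have the origin on their left); a standard argument shows inner faces of an aligned graph can always be triangulated without crossing a pseudoline once the frame is present. Then I would eliminate separating triangles one at a time: for a separating triangle $T$, recursively draw the outside and the inside and paste the inside drawing, scaled, into the face bounded by $T$; this is the classical separating-triangle removal, and it works verbatim here because $T$ lies in the closure of a bounded set of cells and an affine copy of its interior drawing fits inside the drawn triangle. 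Finally I would remove free edges (contract or re-route within their cell, as in the original proof) and remove aligned edges not incident to $o$ (again by a local operation along the pseudoline, using that the two sides of a pseudoline near such an edge are available). Each removal either strictly decreases the number of offending features or strictly decreases $|E|$, so the process terminates in a reduced aligned triangulation.

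The bookkeeping I would be most careful about is that every operation (i) preserves the planar embedding up to homeomorphism, (ii) keeps the alignment complexity at $(1,1,0)$ — in particular never creates a $1$-anchored $\ge 2$-crossed edge or a $2$-anchored $\ge 1$-crossed edge — and (iii) preserves the ccw-property, i.e.\ never creates a $1$-anchored $1$-crossed edge whose source is free; the orientation of each newly added non-aligned edge must be chosen consistently so that its completing Jordan curve $\mathcal C_{uv}$ has $\LO$ on its left. Conversely, each operation must come with an explicit ``lift'': from an aligned drawing of the simpler instance, reconstruct one of the original by inverse contraction / affine insertion / deletion, checking that the five combinatorial conditions (i)–(v) of the definition of aligned drawing are restored. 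Chaining these lifts gives the implication: if every reduced $k$-star aligned triangulation has an aligned drawing, then so does every $k$-star aligned graph.

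The main obstacle I expect is the free-edge and non-origin aligned-edge removal step combined with maintaining the ccw-property: contracting or rerouting such an edge can change which cells incident edges pass through and can flip the ``free vs.\ aligned source'' status of a neighboring edge, so one must argue that a compatible contraction/rerouting always exists (e.g.\ by choosing the endpoint to contract toward based on which side of the pseudoline has room, and by orienting freshly created edges away from any configuration that would violate the ccw-constraint). Everything else — framing, triangulation, separating-triangle elimination — is routine adaptation of the known construction, so I would state those as lemmas citing \cite{DBLP:journals/jgaa/MchedlidzeRR18} and spend the detailed work only on verifying that the reductions stay inside the class of ccw-aligned graphs; the full details would go to the appendix as the excerpt already indicates.
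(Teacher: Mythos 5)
Your overall pipeline (frame, triangulate, split at separating triangles, contract free and non-origin aligned edges, with an explicit lift for each step) matches the paper's strategy, but the triangulation step hides the one genuinely new difficulty, and the way you dispatch it does not work. You claim that once the frame is present, all inner faces can be triangulated by adding edges that lie within a single cell, i.e.\ free edges. This is false in general: a face of a $k$-star aligned graph can have its boundary vertices distributed over several cells (and can contain arcs of several pseudolines in its interior), so some chords of any triangulation must cross pseudolines or be anchored, and one must prove that these new edges can always be chosen with alignment complexity at most $(1,1,0)$ \emph{and} oriented so that no $1$-anchored $1$-crossed edge with a free source is created. The existing triangulation lemma of Mchedlidze et al.\ only covers alignment complexity $(1,0,\sentinel)$ and cannot be cited as is, precisely because ccw-aligned graphs contain $2$-anchored and $1$-anchored $1$-crossed edges. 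The paper's proof resolves this by first isolating every separating edge $uv$ inside a quadrangle $u,w_1,v,w_2$ of harmless edges and deleting $uv$, which brings the instance down to complexity $(1,0,\sentinel)$, then applying the old triangulation lemma, and finally clearing the interiors of the quadrangles and re-inserting the separating edges. This gadget is the missing idea in your argument; without it (or a substitute) the claim that triangulation is ``routine'' is unsupported.

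A second, smaller omission concerns the origin. A reduced aligned triangulation requires a degree-$2k$ vertex $o$ on the origin incident to $2k$ aligned edges and no other aligned edges; you install $o$ at the outset but do not explain how this structure survives the later contractions, nor how to handle an input that already has an aligned edge passing through the origin or a vertex on the origin adjacent to a free vertex. The paper treats these configurations with dedicated local reductions at the very end (subdividing an origin-crossing aligned edge, and re-triangulating around $o$), and the correctness of the corresponding lift rests on the observation that the relevant polygon is star-shaped with $o$ in its kernel — an argument your sketch does not supply.
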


Hence, our main contribution is to characterize reduced $k$-star aligned
triangulations and then, to prove that every such instance has an
aligned drawing.

\begin{lemma} \label{lemma:base_case:structure}
	Let $(G_R, \CurX)$ be a reduced aligned triangulation and let $o$ be the
	vertex on the origin. Then in $(G_R - o, \CurX)$ each pseudoline $\CurX_i$
	alternately intersects vertices and edges, and each comb contains at most
	one vertex.
\end{lemma}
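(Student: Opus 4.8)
The plan is to analyze the structure of $(G_R - o, \CurX)$ one pseudoline at a time, exploiting that $(G_R,\CurX)$ is a \emph{reduced} proper ccw-aligned triangulation. Fix a pseudoline $\CurX_i$. The vertices and edges it meets occur in a linear order along $\CurX_i$ (property (v) of aligned drawings, here used combinatorially). The first claim is the \emph{alternation}: between two consecutive vertices of $\CurX_i$ there is exactly one crossing edge, and between two consecutive crossing edges there is exactly one vertex. For this I would argue as follows. Since $(G_R,\CurX)$ is a triangulation and $o$ has degree $2k$ with all its incident edges aligned, removing $o$ leaves the link of $o$, which is the $2k$-cycle of $2$-anchored boundary edges together with interior structure; each pseudoline $\CurX_i$ passes through $o$ and then, in $G_R-o$, leaves a maximal ``path'' of alternating incidences. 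Two consecutive objects on $\CurX_i$ that are both vertices, say $a$ and $b$ with no incidence between them, would mean that the portion of $\CurX_i$ between $a$ and $b$ lies in the interior of a single face; but every inner face is a triangle, and a pseudoline entering and leaving a triangular face through its interior (without hitting a vertex) must cross two of its edges, contradicting ``no incidence between $a$ and $b$'' — unless $a$ and $b$ are themselves two of the triangle's vertices joined by an edge of that triangle, i.e. an aligned edge of $\CurX_i$ not incident to $o$, which is forbidden in a reduced instance. Symmetrically, two consecutive crossing edges $e,f$ of $\CurX_i$ with no vertex between them bound an edge region that is a single triangular face having no vertex on $\CurX_i$ in its interior; the third vertex of that triangle lies strictly to one side of $\CurX_i$, and then $e$ and $f$ share that vertex, so they are not ``$1$-anchored $1$-crossed or $2$-anchored'' in a way that leaves a gap — more precisely, the segment of $\CurX_i$ between $e$ and $f$ is inside one triangle, so it crosses exactly the two edges $e,f$ and that triangle contributes nothing between them, giving adjacency of $e$ and $f$ around a common vertex and hence no genuine ``between'' region; iterating, the whole configuration collapses, and the only way to have a legitimate alternation is vertex, edge, vertex, edge, $\dots$. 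The endpoints of this alternating sequence are controlled by the outer $2k$-cycle of $2$-anchored edges and by $o$: near $o$ the sequence starts with a crossing edge (the aligned edges at $o$ are deleted), and it ends at the outer boundary.

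The second claim is that each comb contains at most one vertex. A comb is an inclusion-minimal edge region, bounded by two separating edges $e,f$ inside a single cell $\CurQ_j$. Suppose a comb $R$ contained two vertices $p,q$. Since $(G_R,\CurX)$ is a triangulation, the subgraph drawn inside $R$ is a near-triangulation of $R$; in particular $p$ and $q$ are connected, and the edges incident to $p$ and $q$ inside $R$ partition $R$ into triangles. Because $R$ is inclusion-minimal as an edge region, there is no separating edge strictly inside $R$; but any edge of $G_R$ lying in the interior of $R$ with both endpoints among $\{p,q\}$ and the boundary of $R$ is either free (if it stays in $\CurQ_j$ and touches no pseudoline), which is forbidden in a reduced instance, or it is separating ($1$-anchored $1$-crossed or $2$-anchored), contradicting minimality of $R$, or it is aligned on the pseudoline bounding $\CurQ_j$ and not incident to $o$, again forbidden. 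Ruling out these cases forces the triangulation inside $R$ to be so sparse that it cannot accommodate two vertices: with two interior vertices and only the two bounding separating edges $e,f$ on the boundary, a triangulation would need an interior edge $pq$ and further edges from $p,q$ to the endpoints of $e,f$, and at least one of those must be of a forbidden type. Hence $|V(R)\cap \mathrm{int}| \le 1$. I would phrase this via the first (alternation) claim: each comb, being a minimal edge region inside a cell, meets each bounding pseudoline in a single edge–edge or edge–vertex–edge pattern, so by alternation it can house only one vertex.

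The main obstacle I expect is making the ``collapse'' arguments for both claims fully rigorous using only the forbidden-configuration list of a reduced proper ccw-aligned triangulation — specifically, carefully enumerating, for a triangular face that a pseudoline passes through, the possible positions of its three vertices relative to $\CurX_i$ and checking that in every case the reductions (no separating triangles, no free edges, no aligned edges off the origin) plus the ccw-orientation constraint (no $1$-anchored $1$-crossed edge with a free source) leave exactly the alternating pattern. A secondary subtlety is the behaviour at the two ends of each pseudoline's incidence sequence — at $o$ and at the outer $2k$-cycle — where one must confirm the sequence genuinely alternates (e.g. starts with an edge just after $o$), so that the global ``alternately intersects vertices and edges'' statement holds without an off-by-one discrepancy. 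I would handle the endpoint bookkeeping first, since it pins down the parity, and then run the local triangle-by-triangle argument to propagate alternation inward, finally deducing the comb statement as a corollary.
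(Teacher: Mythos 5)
Your first claim (no two consecutive aligned vertices on $\CurX_i$) matches the paper's argument: the piece of $\CurX_i$ between two such vertices lies in a single triangular face, so the edge joining them is present and is an aligned edge not through the origin, contradicting reducedness. The other two parts have genuine gaps. For ``no two consecutive crossing edges,'' your local analysis of the single triangle $T$ between the two crossings only yields that $e$ and $f$ share a vertex $w$ of $T$, after which you assert that ``iterating, the whole configuration collapses.'' That is not an argument, and no purely local argument can succeed: the configuration $e=uw$, $f=u'w$ with $w$ a free target on one side of $\CurX_i$, $u$ anchored on an adjacent pseudoline, $u'$ free, and third edge $g=uu'$ a $1$-anchored $0$-crossed edge satisfies every edge-type and orientation constraint of a reduced ccw-aligned triangulation, so nothing forbidden appears inside $T$ or on its boundary. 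The paper's contradiction is global: it closes up the boundary $\rho_j$ of a cell $\CurQ_j$, reroutes it around the vertices lying on it, and observes that the resulting closed curve is a cycle in the dual $G_R^\star$ and hence an edge cut $C_j$ of $G_R$ with the property that every edge of a connected component of $G_R-C_j$ contained in $\CurQ_j$ is free. The ccw-orientation is then used to find two \emph{distinct} endpoints of $e$ and $f$ in a common cell, and consecutiveness of the crossings puts them in the same component of $G_R-C_j$, producing a forbidden free edge. Your sketch never invokes the no-free-edge condition or the ccw-orientation in this part, which is a sign the argument cannot close.

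The same cut is what makes the comb statement work, and it is missing from your treatment of that part as well. Your trichotomy ``free / separating / aligned'' for edges inside the comb omits $1$-anchored $0$-crossed edges (from a free vertex to a vertex on the bounding pseudoline), which are allowed and are exactly what the edges incident to $p$ and $q$ would be if $p$ and $q$ are not adjacent; consequently the claim that ``a triangulation would need an interior edge $pq$'' is unsupported. What is actually needed is that $p$ and $q$ lie in the same connected component of $G_R-C_i$ --- which the paper derives from inclusion-minimality of the comb together with the triangulation --- so that they are joined by a path of edges staying inside the cell, each of which is free, again a contradiction. You flagged these two points yourself as the main obstacles; they are indeed where the proof lives, and the dual-cut construction is the missing ingredient.
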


\begin{proof}
	Assume that there are two consecutive aligned vertices $u$ and $v$. Since $G_R$ is
	triangulated and $u$ and $v$ are consecutive, $G_R$ contains the edge $uv$. This
	contradicts the assumption that $(G_R, \CurX)$ does not contain aligned edges.

	The following modification helps us to prove that there are no two consecutive
	edges along a pseudoline and that no comb contains two free vertices.

	\begin{figure}
		\centering
		\subfloat[] {
			\includegraphics[page=1]{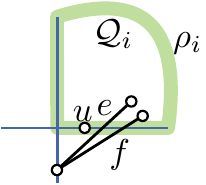}
		}
		\quad
		\subfloat[] {
			\includegraphics[page=2]{figures/2_aligned_structure.pdf}
		}
		
		\caption{The curve $\rho_i$ (a) and its modification in (b).}
		\label{aligned:fig:2aligned_mod}
	\end{figure}

	Let $\rho_i$ be the parts of $\CurX_i$ and $\CurX_{i+1}$ that are on the boundary of
	the cell $\CurQ_i$, see Figure~\ref{aligned:fig:2aligned_mod}. We modify $\rho_i$
	as follows. We first, join the endpoints of $\rho_i$ in the infinity such that
	it becomes a simple closed curve. Let $u$ be a vertex that lies on $\rho_i$.
	We reroute $\rho_i$ such that $u$ now lies outside of $\rho_i$. Since $G_R$ is
	triangulated and $\rho_i$ only intersects edges, $\rho_i$ corresponds to a
	cycle in $G_R^\star$ and therefore to a cut $C_i$ in $G_R$. Note, each edge of
	a connected component in $G-C_i$ is a free edge.

	Now assume that there are two distinct edges $e,f$ that consecutively cross a
	pseudoline $\CurX_i \in \CurX$.  By the premises of the lemma there is a vertex
	that lies on the origin $\CurO$. Hence both $e$ and $f$ cross $\CurX_i$ on the
	same side with respect to $\CurO$.  Since $e$ and $f$ are distinct and $(G_R,
	\CurX)$ is ccw-aligned, there is a cell $\CurQ_j$ such that $\CurQ_j$
	contains two distinct vertices $u$ and $w$ incident to $e$ and $f$,
	respectively. Since $G$ is triangulated and $e$ and $f$ are consecutive along
	$\CurX_i$, $u$ and $w$ are vertices in the same connected component of $G-C_j$.
	Therefore, $(G_R, \CurX)$ contains a free edge.  A contradiction.

	Consider a comb $\mathcal C$ in a cell $\CurQ_i$ that contains two
	distinct vertices $u$ and $v$  in its interior. Since $G$ is triangulated and
	$\mathcal C$ is inclusion-minimal (it does not contain another edge-region),
	$u$ and $v$ belong to the same connected component of $G_R-C_i$. Therefore
	$(G_R,
	\CurX)$ contains a free edge.
\end{proof}

We call a comb \emph{closed} if its two separating edges have the same source vertex.

\begin{lemma}
  \label{lem:closedEmpty}
  For every reduced aligned triangulation $(G_R,\CurX)$ there is a reduced
  aligned triangulation $(G_R'',\CurX)$ where no closed comb contains a vertex
  such that $(G_R,\CurX)$ has an aligned drawing if $(G_R'',\CurX)$ has an aligned drawing.
\end{lemma}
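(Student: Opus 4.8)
The plan is to eliminate closed combs one at a time by a local contraction that is invisible to aligned drawability. Recall that a closed comb~$\mathcal C$ lies in a single cell~$\CurQ_i$, is bounded by two separating edges~$e=su'$ and $f=sv'$ sharing the source~$s$, and by Lemma~\ref{lemma:base_case:structure} contains at most one vertex~$w$ in its interior. I first treat the case that $\mathcal C$ is empty, i.e.\ contains no vertex: then, since $(G_R,\CurX)$ is a triangulation, the interior of~$\mathcal C$ is a union of triangular faces all of whose vertices lie on $e\cup f$, hence $\mathcal C$ consists of a single triangle $su'v'$ (or the edge $u'v'$ is already present), and the region $\mathcal C$ is convex in any aligned drawing; so such a comb imposes no obstruction and may be ignored. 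The substantive case is a closed comb containing exactly one interior vertex~$w$.

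In that case I would contract $w$ onto one of the two boundary edges, say $e=su'$: because $\mathcal C$ is inclusion-minimal and triangulated, $w$ is adjacent to $s$ and to a contiguous arc of vertices along $e$ and along $f$; the edge $sw$ is a free edge inside $\CurQ_i$. I delete $w$ and retriangulate the resulting (now $w$-free, still convex) region by adding the chord $u'v'$ and, if necessary, further chords, all as free or separating edges inside~$\CurQ_i$ that do not cross~$\CurX_i$ again and do not create $1$-anchored $1$-crossed edges with a free source. One must check that the result, call it $(G_R',\CurX)$, is still a reduced proper ccw-aligned triangulation: it has no new separating triangle (the only new triangles live strictly inside the old convex comb region, which was already separated from the origin by $e$ and $f$), no new aligned edge off the origin, and no new free edge — the chords added are free but will be removed again by the reduction of Lemma~\ref{lemma:reduced}; more cleanly, I would instead phrase the step as: apply the contraction, then take the reduced proper ccw-aligned triangulation guaranteed by Lemma~\ref{lemma:reduced} applied to the contracted aligned graph, obtaining $(G_R',\CurX)$ with strictly fewer interior comb vertices. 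Iterating until no closed comb contains a vertex yields the desired $(G_R'',\CurX)$.

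For the implication on aligned drawings I argue in the reverse direction: given an aligned drawing $(\Gamma',\LX)$ of $(G_R',\CurX)$, I reinsert $w$. The edges of $G_R$ incident to~$w$ all have their other endpoint on $e\cup f$, and in $\Gamma'$ the images of $e$ and $f$ bound a convex subregion~$R$ of the line arrangement interior to~$\LQ_i$ (convexity because a separating edge together with the arc of $\LX_i$ it crosses bounds a convex piece, and closedness gives the common apex $s$). Placing $w$ anywhere in the interior of~$R$ sufficiently close to~$s$, near the apex, lets all segments from~$w$ to its neighbours on $e$ and $f$ stay inside~$R$ without crossings and in the prescribed cyclic order; this restores conditions (i)–(v) of an aligned drawing since $w$ is a free vertex whose incident edges are free and stay inside the single cell $\LQ_i$. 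Hence $(G_R,\CurX)$ has an aligned drawing whenever $(G_R'',\CurX)$ does.

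The main obstacle I anticipate is the bookkeeping that the contraction does not destroy the ccw-orientability or introduce a forbidden $1$-anchored $1$-crossed edge with a free source, and that it genuinely decreases the number of interior comb vertices (so the process terminates): one has to be careful that removing~$w$ from a closed comb does not turn a neighbouring open comb into a closed one holding a fresh vertex. I expect this to be handled by observing that the contraction is entirely local to the closed comb~$\mathcal C$ and leaves the separating edges $e$, $f$, and everything outside~$\mathcal C$, untouched, so the set of closed combs elsewhere and their contents are unchanged, and the total count of (interior comb vertices in closed combs) drops by exactly one at each step.
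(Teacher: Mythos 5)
Your ``clean'' variant---contract the comb's interior vertex into the common apex $s$ of the two separating edges, re-reduce via Lemma~\ref{lemma:reduction}, and induct on the number of closed combs containing a vertex, reversing the step by reinserting the vertex near the apex inside its comb---is exactly the paper's argument. Your first variant (delete $w$ and retriangulate with the chord $u'v'$) is the one to drop: $u'$ and $v'$ are the targets of separating edges, so they lie on pseudolines or in neighbouring cells, and the chord $u'v'$ need not be a legal edge type for a ccw-aligned graph; the contraction sidesteps this entirely. A small terminological slip: the edge $sw$ is not free but $1$-anchored $0$-crossed, since $s$, as the source of a separating edge, lies on a pseudoline.

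The one substantive point you flag as ``the main obstacle I anticipate'' and leave open is precisely the content of the paper's proof, and it does need to be carried out: one must verify that the contracted graph is still ccw-aligned, i.e.\ that no $1$-anchored $1$-crossed edge with a \emph{free source} is created. The check is short. Every edge outgoing from the free vertex $w$ is $1$-anchored $0$-crossed or $0$-anchored $1$-crossed; after identifying $w$ with $s$ these become $2$-anchored $0$-crossed or $1$-anchored $1$-crossed with free \emph{target}, both of which are permitted in a ccw-aligned graph. For incoming edges, closedness of the comb and Lemma~\ref{lemma:base_case:structure} (no second vertex in the comb) imply that the only incoming edge of $w$ is $sw$ itself, which the contraction removes---so no problematic incoming edge survives. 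Your termination worry (that re-reduction could create a new closed comb holding a fresh vertex) is legitimate; the paper handles it only by asserting that the construction of Lemma~\ref{lemma:reduction} does not increase the number of closed combs containing a vertex, which matches your locality argument. With the ccw-alignedness check filled in, your proposal coincides with the paper's proof.
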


\begin{proof}
	By Lemma~\ref{lemma:base_case:structure} we know that each comb contains at
	most one vertex.  We apply induction over the number of closed combs that
	contain a vertex.  Let $v$ be a free vertex in a closed comb with separating
	edges $uw_1$, $uw_2$.  Then we obtain an aligned graph $(G_R',\CurX)$ by
	contracting edge $uv$ in the embedding.  Since $(G_R,\CurX)$ is reduced
	ccw-aligned, all edges outgoing from the free vertex $v$ are $1$-anchored
	$0$-crossed or $0$-anchored $1$-crossed.  In $(G_R',\CurX)$ they are now
	2-anchored 0-crossed or 1-anchored 1-crossed with free target vertex. Since
	there is no other vertex in the comb and the comb is closed, $v$ only has $uv$
	as incoming edge which is contracted. Therefore $(G_R',\CurX)$ is
	ccw-aligned.
	Assume that $(G_R', \CurX)$ has an aligned drawing. Since $v$ is a free
	vertex, we obtain an aligned drawing of $(G, \CurX)$ by placing $v$ close to
	$u$ within in its closed comb.
	By Lemma~\ref{lemma:reduction} we obtain a reduced aligned triangulation
	$(G_R'',\CurX)$ from $(G', \CurX)$ such that $(G'_R,\CurX)$ has an aligned
	drawing if $(G_R'',\CurX)$ has an aligned drawing. In the construction the
	number of closed combs that contain a vertex is not increased. 
	%
\end{proof}


\begin{figure}[tb]
	\centering
	\subfloat[\label{fig:freeDef}]{
		\includegraphics[page=2]{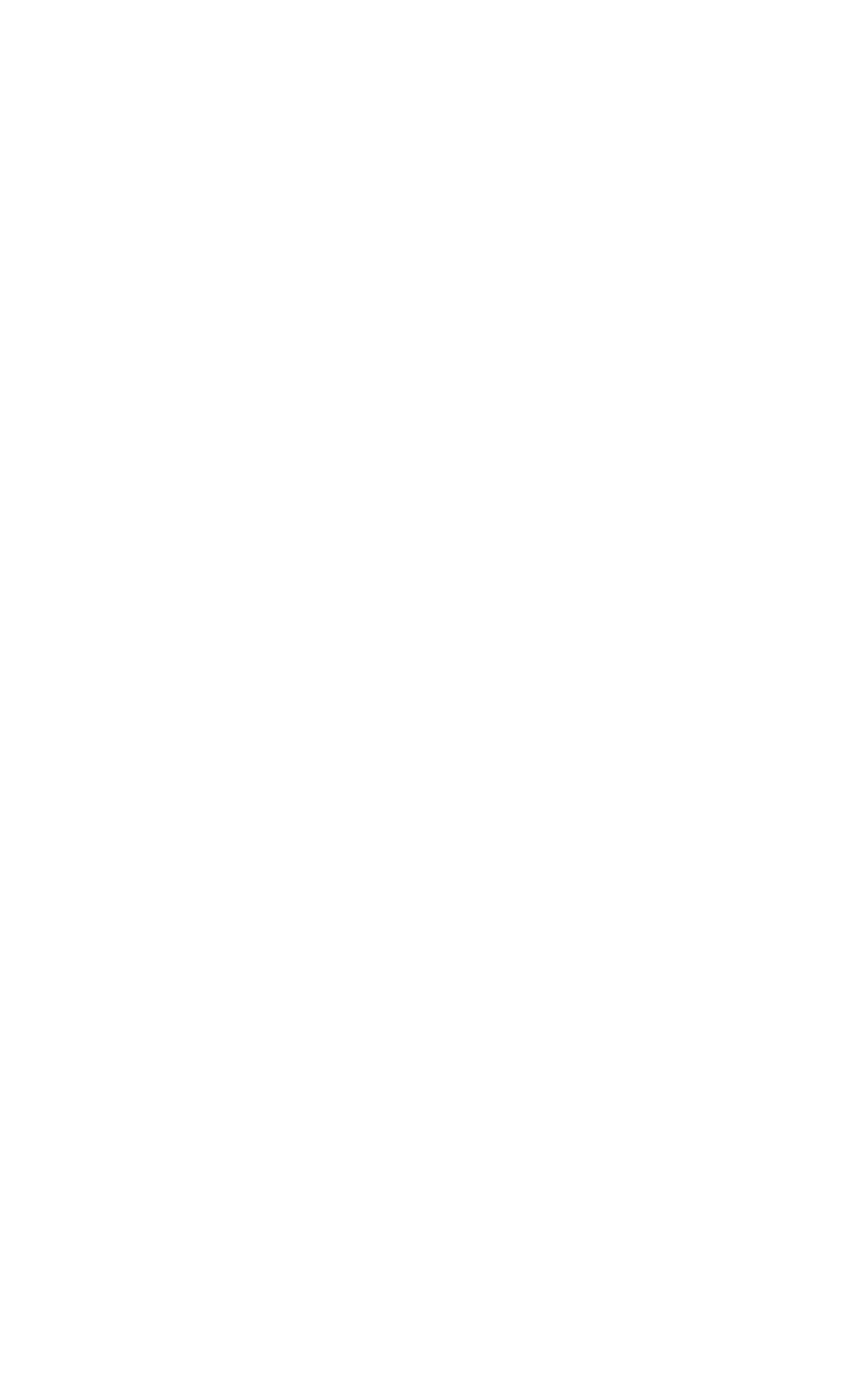}
	}
	\subfloat[\label{fig:constrObservations}]{
		\includegraphics[page=3]{figures/construction_small2.pdf}
	}
 \caption{(a)~Placement of a free vertex $v$ in cell $\CurQ_2$. It may be
  placed within the gray triangle. (b)~Example for the observations with $u_1'=x_3$ and $u_2'=x_4$.} 
\end{figure}

\begin{figure}
	\centering
	\includegraphics{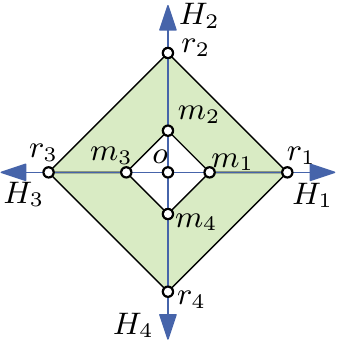}
	\caption{The vertex $o$ and the half-lines $H_i$ and the vertices $m_i$, $r_i$
	for $i=1,\dots,4$. All remaining edges and vertices lie in the green area.}
	\label{fig:HiDef}
\end{figure}

We can now show that each reduced instance has an aligned drawing.

\begin{lemma}
  \label{lemma:base_case:drawing}
	Every reduced ccw-aligned triangulation has an aligned drawing.
\end{lemma}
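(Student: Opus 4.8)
The plan is to place the vertices explicitly, working outward from the origin along the pseudoline structure, and to verify that all order and cell-incidence constraints are met because the relevant regions are convex. Since we only need to handle reduced ccw-aligned triangulations, Lemma~\ref{lemma:base_case:structure} tells us that along each pseudoline $\CurX_i$ (in $G_R-o$) vertices and edges alternate, and by Lemma~\ref{lem:closedEmpty} we may assume that no closed comb contains a vertex, so the free vertices we must accommodate all live in non-closed combs. First I would fix the line arrangement $\LX$: place $O$ at the point for $o$ and draw the $2k$ half-lines $H_1,\dots,H_{2k}$ in counterclockwise order (as in Fig.~\ref{fig:HiDef}), so that the regions $\LQ_1,\dots,\LQ_{2k}$ are the open sectors between consecutive half-lines. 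All of $G_R-o$ must be drawn in a bounded region far from $O$; I would set up a reference point $m_i$ on $H_i$ and require every vertex to lie beyond the $m_i$'s, inside the ``green area'' of Fig.~\ref{fig:HiDef}.

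Next I would place the aligned vertices. Each pseudoline $\CurX_i$ carries a linearly ordered sequence of aligned vertices (those anchored on it), interleaved with edges crossing it; I would put the $j$-th aligned vertex of $\CurX_i$ on the half-line $H_i$ (or $H_{i+k}$, depending on which side of the origin it lies) at an increasing sequence of distances from $O$, chosen so steep enough that the crossing edges between them still cross $H_i$ in the correct order and nowhere else. Then the free and anchored vertices inside each sector $\LQ_i$ get placed: by the structural lemmas the edge-regions inside $\LQ_i$ are nested/combed, each comb holds at most one vertex, and the ccw-condition forbids the bad configuration (free source of a $1$-anchored $1$-crossed edge) that caused the counterexample of Theorem~\ref{theorem:counterexample}. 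Concretely I would process the combs in each sector from the innermost (closest to $O$) outward, placing the vertex of each comb in the intersection of the appropriate open half-planes determined by the two separating edges bounding the comb and by the edges already drawn — an intersection that is nonempty and open precisely because the boundary edges emanate ``counterclockwise'' and thus the feasible region is a nondegenerate convex wedge (cf.\ the gray triangle in Fig.~\ref{fig:freeDef}).

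The correctness argument then has two parts. For planarity and the homeomorphism of $\Gamma$ to the embedding of $G_R$ (conditions (i), (ii)), I would argue inductively: after placing the aligned vertices the outer $2k$-cycle is a convex polygon and the origin vertex $o$ sees all of it, and each subsequently placed vertex is inserted into a face that is currently drawn as a convex (or star-shaped) region with the new vertex visible from all its neighbors, so no crossings are introduced — this is the standard ``incremental convex insertion'' scheme, enabled by the triangulation hypothesis and by the fact that union of two adjacent cells of a line arrangement is convex. For the alignment conditions (iii)--(v), I would check that each edge crosses exactly the half-lines its pseudoline-counterpart crosses, in the same order: an edge entirely inside one sector crosses none; an aligned edge lies on its half-line; a $1$-anchored $1$-crossed or $2$-anchored edge goes from (a point on or near) one half-line to the next and, by the beyond-$m_i$ placement plus the monotone choice of distances along the half-lines, crosses only the intended half-line and does so between the correct pair of consecutive aligned vertices.

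The main obstacle I expect is showing that the feasible placement region for each vertex is simultaneously (a) nonempty, (b) consistent with all five alignment conditions, and (c) compatible with the later insertions — i.e.\ that the greedy innermost-to-outermost order never paints us into a corner. This is exactly where the ccw-hypothesis must be used in an essential way: the forbidden edge type is precisely the one whose straight-line realization would, as in the proof of Theorem~\ref{theorem:counterexample}, force a product inequality $|x_1||y_2||x_3||y_4| < |y_1||x_2||y_3||x_4| < |y_2||x_3||y_4||x_1|$ and hence be unrealizable; ruling it out should make the intersection of constraints a genuine open convex region at every step. Making the nesting/ordering of combs and the induced half-plane constraints precise, and verifying the wedge is nondegenerate, will be the technical heart of the proof; the rest is bookkeeping on orders along the half-lines.
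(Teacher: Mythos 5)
Your high-level strategy matches the paper's: fix the line arrangement, place the aligned vertices (in particular the sources of separating edges) on the half-lines in the order inherited from $\CurX$, put each free vertex into an explicitly described convex region inside its comb, and then check orderings along the $H_i$ and planarity cell by cell. However, there is a genuine gap: you defer precisely the step that constitutes the proof. You write that ``verifying the wedge is nondegenerate'' and making the half-plane constraints precise ``will be the technical heart of the proof,'' but you never exhibit the region or prove it is nonempty and sufficient. The paper does this concretely: each free vertex $v$ in a comb of $Q_i$ bounded by separating edges $u_1w_1,u_2w_2$ is placed in the triangle cut out by $\overline{m_{i+1}u_2}$, $\overline{r_{i+1}u_1}$, $H_i$, further constrained between $\overline{m_{i-1}u_1}$ and $\overline{r_{i-1}u_2}$ (where $m_j$, $r_j$ are the innermost and outermost corners on $H_j$), and then three explicit geometric observations establish that all $1$-crossed edges with target $v$ cross $H_i$ between $u_1$ and $u_2$ and in the correct relative order. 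Asserting that the ccw hypothesis ``should make the intersection of constraints a genuine open convex region'' is not a proof of that fact; the connection you draw to the product inequality of Theorem~\ref{theorem:counterexample} explains why the hypothesis is necessary, not why it is sufficient.

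A second, more structural problem: you propose to process the combs of each sector from innermost to outermost and to place each vertex in a region determined partly by ``the edges already drawn.'' A $0$-anchored $1$-crossed edge joins a free vertex in $Q_{i-1}$ to a free vertex in $Q_i$, and these dependencies wrap cyclically around the origin through all $2k$ sectors, so there is no obvious global order in which ``already drawn'' is well defined; your greedy scheme can paint itself into a corner in exactly the way you worry about. The paper avoids this entirely by making the feasible region of every free vertex depend only on the corners (all placed first) and the fixed reference points $m_{i\pm1}$, $r_{i\pm1}$ --- never on other free vertices --- and then proving a posteriori (Observation~3 in the paper's proof) that any two free vertices placed this way in adjacent sectors induce the correct crossing order on the shared half-line. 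Without either that decoupling or a proof that your incremental order terminates consistently, the proposal does not yet establish the lemma.
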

\begin{proof}
  By Lemma~\ref{lem:closedEmpty} we can assume that in our triangulation $(G,
  \CurX)$ the closed combs contain no vertices. By
  Lemma~\ref{lemma:base_case:structure} we know that each comb contains at most
  one vertex and no vertex if it is closed. The main problem is to draw the
  $1$-crossed edges. For those, we place each free vertex $v$ close to the
  right boundary of its comb. This allows to draw the incoming edges. Since
  $(G,\CurX)$ is ccw-aligned, the target of each $1$-crossed edge $vu$ is free
  and allows to draw $vu$.

	We construct the aligned drawing $(\Gamma, X)$ as follows.
	Let $o$ be the vertex on the origin. We call the sources of separating edges
	\emph{corners}. First place $o$ and all corners on $X$ in the order induced
	from $\CurX$.
	For $i=1,\dots,2|X|$, let $\CurH_i$ be the half-pseudoline that is the right
	boundary of cell $\CurQ_i$. Let $m_i$ denote the vertex on $\CurH_i$ that
	is adjacent to $o$ and let $r_i$ denote the vertex incident to the outer face
	on $\CurH_i$. Note that $m_i$, $r_i$ are corners. We write $u <_i v$ if $u$
	lies between $o$ and $v$ on $\CurH_i$ where $u$, $v$ may be vertices and
	intersections of edges with $\CurH_i$. Note that $<_i$ is a linear order.
	Define $H_i$ correspondingly for $X$; see Figure~\ref{fig:HiDef}. The indices
	for $m_i$, $Q_i$, etc.\ are considered $\bmod\, 2|X|$.  In the following, we
	denote by $\overline{uv}$ the line through two distinct points $u$, $v$.
%
	Now consider a free vertex $v$ in some cell $Q_i$; see
	Figure~\ref{fig:freeDef}. It lies in a comb that is bounded by two separating
	edges $u_1w_1$, $u_2w_2$ with $u_1<_i u_2$ on $\CurH_i$.  Note that we have
	$u_1\ne u_2$ since the comb contains $v$ and is thus not closed.  We place $v$
	within the triangle bounded by $\overline{m_{i+1}u_2}$,
  $\overline{r_{i+1}u_1}$, $H_i$ and between $\overline{m_{i-1}u_1}$,
  $\overline{r_{i-1}u_2}$ (if these lines cross within $Q_i$, then this means within
  the triangle bounded by $\overline{m_{i-1}u_1}$,
  $\overline{r_{i-1}u_2}$, $H_i$). Note that $v$ lies in $Q_i$ .
	We will show that the intersections of $1$-crossed edges with $H_i$ and the
	corners on $H_i$ respect the order $<_i$.  Finally, we place for $i=1,\dots,2|X|$
	the vertices on $\CurH_i$ that are neither $o$ nor a corner arbitrarily on
	$H_i$ respecting the order $<_i$. This finishes the construction (edges are
	placed accordingly).

  We next show that the vertices and edges of $G$ appear for $1\le i\le |X|$
  along $X_i$ and $\CurX_i$ in the same order.
  Consider the free vertex $v$ and the separating edges $u_1w_1$, $u_2w_2$ as
  defined above.  Let $m_{i-1}=x_1<_{i-1}\dots<_{i-1}x_k=r_{i-1}$ denote the
  corners on $H_{i-1}$.  The following three observations imply that all
  $1$-crossed edges with target $v$ cross $H_i$ in the correct order between
  $u_1$ and $u_2$; refer to  Figure~\ref{fig:constrObservations}. 

  \begin{enumerate}
    \item \label{itm:constr_obs1} $\overline{m_{i-1}v}$ and
    $\overline{r_{i-1}v}$ cross $H_i$ between $u_1$ and $u_2$.  

  \item \label{itm:constr_obs2} $\overline{x_1v},\dots,\overline{x_kv}$
    intersect $H_i$ in the same order as $x_1,\dots,x_k$ lie on $\CurH_{i-1}$.

  \item \label{itm:constr_obs3}  Let  $v'$ be a free vertex in $Q_{i-1}$.  Let
    $u_1'w_1'$, $u_2'w_2'$ be the separating edges of the comb containing $v'$.
    Then $v'v$ crosses $H_i$ between  $\overline{u_1'v}\cap H_i$ and
    $\overline{u_2'v}\cap H_i$.
  \end{enumerate}
 
	For Observation~\ref{itm:constr_obs1}, note that $v$ lies between
  $\overline{m_{i-1}u_1}$, $\overline{r_{i-1}u_2}$.  For
  Observation~\ref{itm:constr_obs2}, note that
  $\overline{x_1v},\dots,\overline{x_kv}$ cross pairwise in $v$ and thus not in
  section $Q_{i-1}$. These two observations imply that
  $\overline{x_1v},\dots,\overline{x_kv}$ cross $H_{i-1}$ between $u_1$ and
  $u_2$. For Observation~\ref{itm:constr_obs3} note now that $v'$ lies in the
  triangle bounded by $H_{i-1}$, $\overline{u_2'm_i}$ and $\overline{u_1r_i}'$.
  Observation~\ref{itm:constr_obs3} follows from $v$ and this triangle lying between
  $\overline{u_1m_{i-1}}$ and $\overline{u_2r_{i-1}}$.

	We now show that all $1$-crossed edges with target $v$ cross $H_i$ in the
	correct order between $u_1$ and $u_2$.  By
	Observations~\ref{itm:constr_obs2},~\ref{itm:constr_obs3} the $1$-crossed edges
	with target $v$ cross $H_i$ between $\overline{m_{i-1}v}\cap H_i$ and
	$\overline{r_{i-1}v}\cap H_i$. With Observation~\ref{itm:constr_obs1}, they
	cross $H_i$ between $u_1$ and $u_2$.  By Observation~\ref{itm:constr_obs2}, we
	know that the $1$-anchored $1$-crossed edges with target $v$ cross $H_i$ in the
	correct order.  By Observations~\ref{itm:constr_obs2},~\ref{itm:constr_obs3},
	we obtain that each pair of a 0-anchored 1-crossed and a 1-anchored 1-crossed
	edge cross $H_i$ in the correct order.  Since the sources of $0$-anchored
	$1$-crossed edges with target $v$ lie in different combs, they lie pairwise on
	different sides of some edge $x_jv$ by Observation~\ref{itm:constr_obs3}.
	Observation~\ref{itm:constr_obs2} then yields their correct ordering.

  Since the corners on $H_i$ respect $<_i$ and all
  1-crossed edges have free target vertices (as the triangulation is
  ccw-aligned), this implies that the intersections of $1$-crossed edges with
  $H_i$ and the corners on $H_i$ respect the order $<_i$.
  By construction, we placed the vertices on $\CurH_i$ that are not corners
  such that they also respect order $<_i$. Thus the lines $X_j$
  intersect the vertices and edges in the same order as $\CurX_j$.

	We next show that our embedding is planar by showing that there is no location
	where edges cross. Since the order of intersections with lines in $X$ is correct,
	there are no crossings on $X$. This leaves us with the cells.  Since
	the separating edges of $\CurQ_i$ appear in the same order on $\CurH_i$ and
	$\CurH_{i+1}$, they also appear in the same order on $H_i$ and $H_{i+1}$.
	Thus, separating edges of the same cell do not cross each other. We
	further obtain the same combs for $(\Gamma,XY)$.
	Consider again a free vertex $v$ in $Q_i$ and the corresponding separating
	edges $u_1w_1$, $u_2w_2$; see Figure~\ref{fig:freeDef}. Since $v$ lies in the
	triangle bounded by $H_i$, $T_1$ and $\overline{m_{i+1}u_2}$, it also lies in
	the comb bounded by $u_1w_1$, $u_2w_2$.  Hence, every free vertex lies in the
	correct comb. 
	Let $e$ be an edge incident to $v$. Then its other end vertex does not lie
	within the comb of $v$. It must therefore intersect $\CurH_i$ between $u_1$
	and $u_2$ if it is incoming, and it must intersect $\CurH_{i+1}$ between
	$u_1w_1\cap\CurH_{i+1}$ and $u_2w_2\cap\CurH_{i+1}$ if it is outgoing.  Since
	we have the same order on $H_i$ and $H_{i+1}$ respectively, edge $e$ crosses
	neither $u_1w_1$ nor $u_2w_2$ and thus not the interior of any other comb in
	$Q_i$.  This means that  \begin{inparaenum} \item There are no crossings on
		separating edges in the corresponding cells. And that \item Only edges
			incident to the free vertex $v$ in a comb intersect the interior of that
			comb. Those edges are all adjacent in $v$ and do not cross.
		\end{inparaenum} We obtain that there are no crossings on $X$, no
		crossings on separating edges in the corresponding cells and no
		crossings within combs. Hence, our embedding is planar. 

  Since there are no free edges and the order of intersections with lines in $X$ is
  fixed, the order of incident edges around a free vertex is also fixed.  For a
  vertex $u$ on $X$ we note that each adjacent free vertex is in another
  comb and therefore the order of incident edges around $u$ is also fixed.
  Therefore, our embedding $\Gamma$ induces the same combinatorial embedding as
  the embedding of $G$. 
  \end{proof}

From Lemma~\ref{lemma:reduced} and Lemma~\ref{lemma:base_case:drawing} we
directly obtain our main theorem.

\begin{theorem}\label{aligned:theorem:2aligned:drawable}
	Every ccw-aligned graph $(G, \CurX)$ has an aligned drawing.
\end{theorem}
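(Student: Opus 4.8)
The plan is to obtain the theorem by chaining the two main lemmas, since all of the geometric work has already been done in them. Given an arbitrary ccw-aligned graph $(G,\CurX)$, I would first apply Lemma~\ref{lemma:reduced}: it reduces the question whether $(G,\CurX)$ admits an aligned drawing to the question whether every reduced $k$-star aligned triangulation arising in the reduction does. The one point that needs a short argument here is that the reduction (augmentation to a proper triangulation, and removal of separating triangles, free edges, and aligned edges not incident to the origin) stays inside the ccw-aligned class, so that the triangulations produced are in fact \emph{reduced ccw-aligned triangulations} in the sense defined above. This follows from how the augmentation and the reduction steps act on edge types: splitting or contracting a free or an aligned edge only produces edges of the allowed alignment complexity $(1,1,0)$, and no $1$-anchored $1$-crossed edge with a free source is created, exactly as in the corresponding step of Mchedlidze et al.

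Next I would invoke Lemma~\ref{lemma:base_case:drawing}, which states precisely that every reduced ccw-aligned triangulation has an aligned drawing; its proof in turn rests on the structural Lemmas~\ref{lemma:base_case:structure} and~\ref{lem:closedEmpty} (each comb carries at most one vertex, and closed combs may be assumed empty) together with the explicit placement of the corners on the lines of $X$ and of each free vertex near the right boundary of its comb. Composing the two implications --- ``$(G_R,\CurX)$ drawable $\Rightarrow$ $(G,\CurX)$ drawable'' from Lemma~\ref{lemma:reduced} and ``every reduced ccw-aligned triangulation is drawable'' from Lemma~\ref{lemma:base_case:drawing} --- yields that $(G,\CurX)$ has an aligned drawing, which is the statement of Theorem~\ref{aligned:theorem:2aligned:drawable}.

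The main obstacle is not in this last composition, which is essentially bookkeeping, but in the ingredients it uses: the heart of the matter is Lemma~\ref{lemma:base_case:drawing}, whose proof must verify that the proposed coordinates realize \emph{all} of the combinatorial data --- the same cyclic order of vertices and edges along each line $X_i$ as along $\CurX_i$, each free vertex in the correct comb, and planarity --- and the subtle case there is the ordering of the $0$-anchored and $1$-anchored $1$-crossed edges entering a free vertex, which is exactly where the ccw-aligned hypothesis (no $1$-anchored $1$-crossed edge with a free source) is used; without it, the construction of Theorem~\ref{theorem:counterexample} shows the statement fails. Hence, in writing up the theorem itself I would keep the argument to a couple of lines and only make sure that the class of instances handled by the reduction coincides with the class handled by the base case.
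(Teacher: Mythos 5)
Your proposal is correct and matches the paper's proof exactly: the theorem is obtained by composing Lemma~\ref{lemma:reduced} with Lemma~\ref{lemma:base_case:drawing}, and the point you flag (that the reduction stays within the ccw-aligned class) is precisely what the appendix lemmas establish. No gaps.
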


\section{Parallel lines}

In this section, we prove that every aligned graph $(G, \CurArr)$ has an aligned
drawing, if $\CurArr$ is intersection free, i.e., the line
arrangement $\Arr$ is a set of parallel lines.

Our result uses a result of Eades at
al.~\cite{DBLP:journals/algorithmica/EadesFLN06}, and of Pach and
Toth~\cite{DBLP:journals/jgt/PachT04}. Eades et al.  consider hierarchical plane
graphs. A graph $G=(V, E)$ with a mapping of the vertices to a layer $L_i$ is a
\emph{hierarchical graph}, where a set of \emph{layers} $\mathcal L$ is a set of
ordered parallel horizontal lines $L_i \in \mathcal L$. A hierarchical plane
drawing of a hierarchical graph is a planar drawing where each vertex is on its
desired layer and each edge is drawn as a $y$-monotone curve.  Two hierarchical
drawings are \emph{equivalent} if each layer, directed from $-\infty$ to
$\infty$, crosses the same set of edges and vertices in the same order. Eades
et.  al.~\cite{DBLP:journals/algorithmica/EadesFLN06} proved that for every
hierarchical planar drawing of a graph there is an equivalent hierarchical
planar straight-line drawing.  Pach and Toth~\cite{DBLP:journals/jgt/PachT04}
proved a similar result stating that for every $y$-monotone drawing where no two
vertices have the same $y$-coordinate there is an equivalent $y$-monotone
straight-line drawing such that each vertex keeps its $y$-coordinate. In
contrast to these two results, we have that the $y$-coordinate is only
prescribed for a subset of the vertices, i.e., there are some (free) vertices
that have to be positioned between two layers (lines). The proof strategy is to
extend the initial pseudoline arrangement with an additional set of
intersection-free pseudolines such that there are no free vertices.

Due to \cite{DBLP:journals/jgaa/MchedlidzeRR18} (compare Lemma~\ref{lemma:reduced}), we can assume that there are neither free nor aligned edges. For the
purpose of this section, a \emph{reduced aligned graph} is an aligned graph that
has no aligned edges and no free vertices. Note that previously only free edges
were forbidden. Thus, the current definition is more restrictive.  The following
theorem is an immediate corollary from the results of Eades et
al.~\cite{DBLP:journals/algorithmica/EadesFLN06}, and Pach and
Toth~\cite{DBLP:journals/jgt/PachT04}.

\begin{theorem}
	\label{theorem:proper_drawing}
	For every intersection-free pseudoline arrangement, every reduced aligned graph
	$(G,\CurArr)$ has an aligned drawing. 
\end{theorem}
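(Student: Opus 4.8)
The plan is to recognise a reduced aligned graph over a parallel pseudoline arrangement as, essentially, a hierarchical plane drawing and to invoke the straight-line results of Eades et al.~\cite{DBLP:journals/algorithmica/EadesFLN06} and of Pach and Toth~\cite{DBLP:journals/jgt/PachT04}. Order the pseudolines $\CurArr=\{\CurL_1,\dots,\CurL_k\}$ so that $\CurL_i$ is the boundary between the consecutive cells $\Cell_{i-1}$ and $\Cell_i$, where $\Cell_0,\dots,\Cell_k$ are the cells of $\CurArr$. Since $(G,\CurArr)$ is reduced it has no aligned edges and no free vertices, so every vertex lies on exactly one pseudoline, and no edge has both endpoints on the same pseudoline --- such an edge would be aligned or would meet that pseudoline in two points, contradicting that a pseudoline meets each non-aligned edge in at most one point. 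Hence, labelling the endpoints of an edge so that $u\in\CurL_a$ and $v\in\CurL_b$ with $a<b$, the edge meets exactly the pseudolines $\CurL_a,\dots,\CurL_b$ (at an endpoint for $\CurL_a$ and $\CurL_b$, at a single transversal crossing for each of $\CurL_{a+1},\dots,\CurL_{b-1}$) and traverses exactly the cells $\Cell_a,\dots,\Cell_{b-1}$ in this order.

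First I would fix a drawing of $G$ realising the combinatorial data of $(G,\CurArr)$ and apply a homeomorphism of the plane sending each $\CurL_i$ to the horizontal line $L_i\colon y=i$; this produces a drawing in which every vertex on $\CurL_i$ has $y$-coordinate $i$ and every edge from height $a$ to height $b$ crosses each of $L_{a+1},\dots,L_{b-1}$ exactly once and no other $L_i$. Cutting each edge at these crossings yields arcs, each contained in one closed strip between two consecutive lines, with endpoints on the two bounding lines (or at a vertex) and interior in the open strip. I would then replace every arc by a $y$-monotone arc with the same endpoints inside the same closed strip. Within each strip the arcs form a non-crossing family joining the two sides, so they can be monotonised simultaneously without creating crossings and without changing, at any vertex, the left-to-right orders of its upward and of its downward edges; hence planarity and the rotation system are preserved. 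The result is a hierarchical plane drawing of $G$, regarded as a hierarchical graph with $v$ on layer $i$ whenever $v\in\CurL_i$.

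Now I would apply Eades et al.\ (Pach and Toth yield the same conclusion after a perturbation) to obtain an equivalent hierarchical planar straight-line drawing $\Gamma$: a straight-line drawing with every vertex on its line $L_i$ in which each $L_i$ meets the same vertices and edges in the same order as before. I claim $(\Gamma,\{L_1,\dots,L_k\})$ is an aligned drawing of $(G,\CurArr)$. Condition~(i) is immediate, and condition~(v) is the equivalence statement once one notes that the vertices of $\CurL_i$ sit on $L_i$. Since a straight segment between two points of different height is $y$-monotone, each edge of $\Gamma$ spans exactly the strips between the lines through its endpoints, in the monotone order; as equivalence preserves, for every edge, both the range of layers it spans and whether it is drawn upward or downward, conditions~(iii) and~(iv) follow (a vertex on $L_i$ lying in the closures of exactly the two strips bounded by $L_i$, as it did for $\Cell_{i-1}$ and $\Cell_i$). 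For condition~(ii) we may assume $G$ connected (after the standard augmentation behind Lemma~\ref{lemma:reduced}), and then a $y$-monotone plane drawing of $G$ is determined up to homeomorphism of the plane by the layer assignment and the orders along the layers, so $\Gamma$ realises the planar embedding of $G$.

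I expect the main obstacle to be organisational rather than deep: spelling out that the simultaneous monotonisation of all arcs keeps the embedding, and translating the single word ``equivalent'' in the statements of Eades et al.\ and Pach and Toth into the five conditions defining an aligned drawing. The genuine content --- turning a topological hierarchical drawing into a straight-line one while keeping every vertex on its prescribed line --- is entirely imported from those two results, which is exactly why the theorem can be stated as an immediate corollary.
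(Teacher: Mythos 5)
Your proposal is correct and takes essentially the same approach as the paper: the paper gives no proof of Theorem~\ref{theorem:proper_drawing}, stating it as an ``immediate corollary'' of Eades et al.\ and Pach--T\'oth, and your argument is exactly that corollary worked out --- observing that a reduced aligned graph over a parallel arrangement (no free vertices, no aligned edges, each pseudoline meeting each edge at most once) is a hierarchical plane graph, monotonising within strips, invoking the straight-line result, and checking the five conditions of an aligned drawing.
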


\begin{lemma}
	\label{lemma:to_proper_aligned}
	Let $\CurArr$ be an intersection-free pseudoline arrangement and let $A$ be
	a line arrangement homeomorphic to $\CurArr$.
	For every aligned graph $(G, \CurArr)$ there is a reduced aligned graph $(G,
	\CurArr')$  such that $\CurArr \subset \CurArr'$ and $(G, \CurArr)$ has an aligned drawing
	if $(G, \CurArr')$ has an aligned drawing.
\end{lemma}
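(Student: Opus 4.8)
\begin{sketch}
	The plan is to add the new pseudolines one at a time, each through a single free vertex, and to induct on the number of free vertices of $(G,\CurArr)$; if there are none then $(G,\CurArr)$ is already reduced and we take $\CurArr'=\CurArr$. Since $\CurArr$ is intersection free, its cells are parallel strips and a free vertex $v$ lies in the interior of one such strip $\CurQ$. It suffices to build one pseudoline $\CurL'$ that (a)~passes through $v$ and stays inside the open strip $\CurQ$ -- so that $\CurArr\cup\{\CurL'\}$ is again intersection free and $\CurL'$ is a pseudoline with respect to $G$, with both ends in the outer face at the two ends of $\CurQ$ -- (b)~crosses every edge of $G$ at most once, and (c)~contains no edge of $G$. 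Given such an $\CurL'$, the arrangement $\CurArr'=\CurArr\cup\{\CurL'\}$ has one fewer free vertex (no vertex is added and $v$ becomes aligned), still has no aligned edges, and still has no free edges (adding pseudolines only subdivides cells); hence after finitely many steps we reach a reduced aligned graph.

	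To build $\CurL'$ I would first analyse the picture at $v$. Since $(G,\CurArr)$ has no free edges, every edge at $v$ leaves $\CurQ$; and since each pseudoline meets each edge at most once and $\CurArr$ is intersection free, it leaves through exactly one of the two boundary pseudolines of $\CurQ$. Let $U_v$, $D_v$ be the edges leaving through the upper resp.\ lower boundary. Inside $\CurQ$ these are pairwise disjoint arcs from $v$ to the boundary of the strip, so $U_v$ and $D_v$ form two contiguous blocks in the rotation at $v$, separated by two gaps pointing to the two ends of $\CurQ$. If $U_v$ and $D_v$ are both nonempty, the star $T_v$ at $v$ formed by these edges joins the two boundary pseudolines and cuts $\CurQ$ into sectors, the two ends of $\CurQ$ lie in two \emph{distinct} such sectors $\sigma,\sigma'$, each a closed disk with $v$ on its boundary, and every edge of $G$ not incident to $v$ lies in a single sector. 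I route $\CurL'$ in two halves: inside $\sigma$ from $v$ (entered through the appropriate gap) to the end of $\CurQ$ it contains, and symmetrically in $\sigma'$. Each half comes from a path in the (connected) dual graph of the part of $G$ drawn inside $\sigma$ (resp.\ $\sigma'$); a path repeats no edge, so the curve it traces crosses each edge of $G$ at most once and stays inside $\sigma$. As $\sigma,\sigma'$ are disjoint, the two halves together cross each edge at most once and meet only at $v$, establishing (a)--(c).

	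The delicate case is when all edges at $v$ leave on the same side, say $U_v=\emptyset$ (if $v$ is isolated this is trivial): then the two ends of $\CurQ$ lie in the \emph{same} sector $\sigma$ of $T_v$, and one must ensure the two halves of $\CurL'$ inside $\sigma$ do not traverse a common edge. This is the step I expect to be the main obstacle. The key observation is that no single edge $g$ of $G$, and likewise no boundary pseudoline of $\CurQ$, can separate $v$ from both ends of $\CurQ$ inside the closed strip: such a $g$ would need both feet on the strip boundary in positions forcing either two of the edges of $v$ to run from one side of $g$ to the other (a crossing of two edges of $G$, impossible) or $g$ itself to meet a boundary pseudoline twice (impossible, as pseudolines meet each edge at most once). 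Consequently, in the dual graph of the part of $G$ drawn inside $\sigma$ no bridge separates the face at $v$ from the faces at the two ends of $\CurQ$; by the standard fact that edge-disjoint paths from a vertex $w$ to $s$ and to $t$ exist exactly when no bridge separates $w$ from both of $s,t$, this dual graph admits a trail running from one end-face through the $v$-face to the other end-face. The trail repeats no edge, so the curve it traces crosses each edge at most once, stays inside $\sigma$, and can be made to pass through $v$; this is the required $\CurL'$.

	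It remains to verify the reverse implication. If $(G,\CurArr')$ has an aligned drawing $(\Gamma,A')$, delete from $A'$ the lines corresponding to the added pseudolines; the remaining sub-arrangement $A$ is homeomorphic to $\CurArr$, and $(\Gamma,A)$ is an aligned drawing of $(G,\CurArr)$, because every cell of $\CurArr$ is a union of cells of $\CurArr'$ and the incidences between vertices, edges, lines and cells, together with their orders along lines and along edges, are inherited from $(\Gamma,A')$; conditions (i)--(v) are then immediate. (If a particular realization $A$ of $\CurArr$ is prescribed, one additionally chooses the lines realizing $\CurArr'$ to extend $A$.) The induction now yields the lemma.
\end{sketch}
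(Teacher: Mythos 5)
Your argument is sound and reaches the right conclusion, but it takes a genuinely different and considerably heavier route than the paper. The paper's construction is uniform and local: for a free vertex $v$ it picks one boundary pseudoline $\CurL$ of the cell containing $v$, pushes a parallel copy of $\CurL$ slightly into the cell so that it crosses, in the order induced by $\CurL$, exactly the edges that cross or end on $\CurL$, and then detours this copy through $v$ along the first and last edges joining $v$ to $\CurL$. Because the new curve hugs $\CurL$, it never has to ``go around'' anything, each edge is met at most once by construction, and no case distinction is needed; in particular the situation you single out as delicate (all edges of $v$ leaving through the same boundary) is handled identically to the generic one. Your dual-graph routing buys generality (it does not presuppose a triangulation, which the paper's proof quietly uses to guarantee an edge from $v$ to $\CurL$) and you correctly isolate the one genuine obstruction, namely a single edge whose arc separates $v$ from both ends of the strip, and correctly kill it by observing that such an arc would have both feet on one boundary pseudoline and hence meet it twice. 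The one step of your sketch that is not yet a proof is the passage from two edge-disjoint dual paths (equivalently, a dual trail through the $v$-face) to a \emph{simple} curve: a trail may revisit faces, and the two realized half-curves may be forced to cross each other inside a common face; one needs the standard crossing-removal surgery (swap the two tails at an intersection point, which preserves the multiset of edge crossings and strictly decreases the number of mutual crossings) to obtain disjoint arcs meeting only at $v$. With that addition your argument is complete; as written it is a correct but more laborious alternative to the paper's parallel-copy construction.
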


\begin{proof}

\begin{figure}
	\centering
	\includegraphics{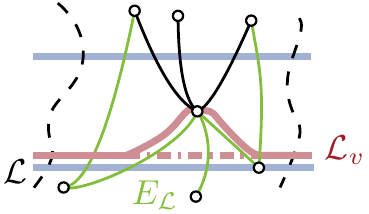}
	\caption{Construction of the new pseudoline $L_v$ (red) that contains $v$.
	The red-dotted pseudoline $L_v'$ indicates the copy of $L$ (bottom blue)
	that crossed the edges in $E_L$ (green) in the same order as $L$}
	\label{fig:pseudoline:ext}
\end{figure}

We first insert for each free vertex $v$ a new pseudoline $\CurL_v$ to
$\CurArr$ such that $v$ is on $\CurL$. Thus, the aligned graph $(G, \CurArr')$
does not have free vertices.

Let $\CurL$ be a pseudoline that is on the boundary the region $R_v$ of
$\CurArr$ that contains $v$.  Let $E_\CurL$ be the set of edges of $G$ that are
(partially) routed through $R_v$ and that are either crossed by $\CurL$ or that
have an endpoint on $\CurL$. 
We assume that $\CurL$ is directed.  Then the direction of $\CurL$ induces a
total order of the edges in $E_\CurL$. We obtain a curve $\CurL_v'$ that crosses
the edges in $E_\CurL$ in this order and in their interior. Since $v$ is free, $G$ is
triangulated and $(G,\CurArr)$ contains neither free nor aligned edges, there is
at-least one edge $e \in E_\CurL$ that is incident to $v$. Denote by $e_f$ and $e_l$ in
$E_\CurL$ the first and last edge incident to $v$. We obtain a pseudoline
$\CurL_v$ that contains $v$ from $\CurL_v'$ by rerouting $\CurL_v'$ along $e_f$
and $e_l$ such that it is does not cross these edges in their interior and such
that $v$ is on the line (Fig.~\ref{fig:pseudoline:ext}).

Now, let $(G, \CurArr')$ be the aligned graph that is obtained by the previous
procedure for each free vertex $v$.
Let $A'$ be any set of parallel lines that contains $A$ and corresponds to
$\CurArr'$.
Clearly, $(\Gamma, A)$ is an aligned drawing of $(G, \CurArr)$ if $(\Gamma, A')$
is an aligned drawing of $(G, \CurArr')$. This finishes the proof.
\end{proof}

Theorem~\ref{theorem:proper_drawing} and Lemma~\ref{lemma:to_proper_aligned} together prove
the following theorem.

\begin{theorem}
	Let $\CurArr$ be an intersection-free pseudoline arrangement and let $A$ be
	a (parallel) line arrangement homeomorphic to $\CurArr$. Then every aligned graph $(G,
	\CurArr)$ has an aligned drawing $(G, A)$.
\end{theorem}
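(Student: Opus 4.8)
The plan is to deduce the final theorem directly by composing the two results stated immediately before it, so the proof is essentially a one-paragraph argument. First I would recall what we are given: $\CurArr$ is an intersection-free pseudoline arrangement, $A$ is a line arrangement (necessarily a set of parallel lines) homeomorphic to $\CurArr$, and $(G,\CurArr)$ is an arbitrary aligned graph on $\CurArr$ (here we may invoke Lemma~\ref{lemma:reduced} to assume there are no free edges and no aligned edges that are problematic, exactly as the section's preamble notes). The goal is an aligned drawing of $(G,\CurArr)$ realized on the specific arrangement $A$.

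The key step is to apply Lemma~\ref{lemma:to_proper_aligned} to $(G,\CurArr)$ together with the given arrangement $A$: this yields an enlarged intersection-free pseudoline arrangement $\CurArr'\supset\CurArr$ such that $(G,\CurArr')$ is a \emph{reduced} aligned graph (no aligned edges, no free vertices) and such that any aligned drawing of $(G,\CurArr')$ restricts to an aligned drawing of $(G,\CurArr)$. Next I would choose — as permitted by the last sentence of that lemma's proof — a set $A'$ of parallel lines that extends $A$ and is homeomorphic to $\CurArr'$; such an $A'$ exists because the new pseudolines added in the construction can be placed consistently with the ordering of the regions (one simply inserts a new parallel line into the appropriate slab, and since $A$ is already fixed, the extension is forced up to the relative positions of the newly added lines, which can be arbitrary). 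Then Theorem~\ref{theorem:proper_drawing} applies to the reduced aligned graph $(G,\CurArr')$ with the concrete line arrangement $A'$: it guarantees an aligned drawing $(\Gamma,A')$ of $(G,\CurArr')$. Restricting attention to the sublines of $A'$ that correspond to $\CurArr$ — that is, forgetting the added lines — the pair $(\Gamma,A)$ is then an aligned drawing of $(G,\CurArr)$ by the last sentence of Lemma~\ref{lemma:to_proper_aligned}. This completes the argument.

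The one place that needs a sentence of care — and the only real obstacle — is the bookkeeping about \emph{which} line arrangement we end up drawing on. Theorem~\ref{theorem:proper_drawing} as stated produces \emph{some} homeomorphic line arrangement for the reduced instance, whereas the final theorem insists on the prescribed arrangement $A$. Chasing through the proof of Theorem~\ref{theorem:proper_drawing} (which rests on Eades et al.\ and on Pach--Toth), the $y$-coordinates of the layers are in fact a free choice, so one may fix them to match $A'$ beforehand; and $A'$ in turn was chosen to contain $A$. Hence the restriction to the sublines corresponding to $\CurArr$ recovers exactly $A$, not merely something homeomorphic to it. I would state this explicitly so that the claimed strengthening — an aligned drawing $(G,A)$ on the given arrangement, rather than on an arbitrary homeomorphic one — is actually justified. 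Everything else is immediate from the two cited results, so no further lemmas or calculations are required.
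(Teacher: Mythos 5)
Your proposal is correct and follows essentially the same route as the paper, which proves this theorem simply by composing Lemma~\ref{lemma:to_proper_aligned} (augmenting $\CurArr$ to a reduced instance $(G,\CurArr')$ with a parallel extension $A'\supset A$) with Theorem~\ref{theorem:proper_drawing}. Your extra remark about pinning down the prescribed arrangement $A$ rather than an arbitrary homeomorphic one is a reasonable point of care that the paper leaves implicit, but it does not change the argument.
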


\section{Conclusion}

In the paper, we showed that every aligned graph $(G, \CurArr)$ has an aligned
drawing if $(G, \CurArr)$ is either a ccw-aligned graph or if $\CurArr$ is
intersection-free. Further, we provided a non-trivial example of a $2$-star
aligned graph that does not admit an aligned drawing. Thus, in our opinion the
most intriguing open question is whether every aligned graph of alignment
complexity $(1,0,0)$ has an aligned drawing, for general stretchable pseudoline
arrangements $\CurArr$. Our example shows that this statement is not
true for aligned graphs of alignment complexity $(1,1,0)$. Our stretchability
proof of counterclockwise aligned graphs uses the fact that we can move each
free vertex $v$ to an aligned vertex $u$ on the cell of $v$. Performing this
operation for all free vertices at once ensures that we do not introduce edges
of a forbidden alignment complexity. Figure~\ref{fig:vertex_mapping} indicates
that for general aligned graphs of alignment complexity $(1,0,0)$ there is not
always a consistent mapping of free vertices to aligned vertices such that that
the resulting graph has the same alignment complexity. Thus it is unclear
whether the techniques used in the paper can be used to decide whether every
aligned graph of alignment complexity $(1,0,0)$ has an aligned drawing.

\begin{figure}
	\centering
	\subfloat[]{
		\includegraphics[page=1]{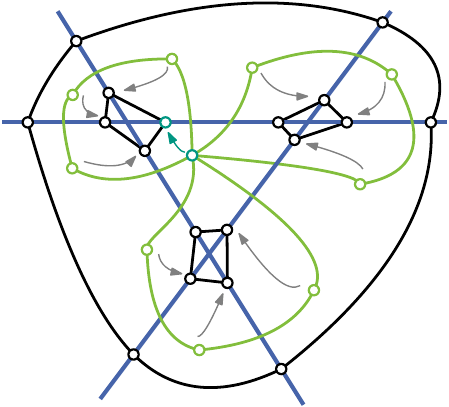}
	}
	\subfloat[]{
		\includegraphics[page=2]{figures/general_k.pdf}
	}
	\caption{There is no mapping of free vertices to aligned vertices on the
	boundary of the same cell such that moving the free vertices onto their image
	results in an aligned graph of alignment complexity $(1,0,0)$.}
	\label{fig:vertex_mapping}
\end{figure}

\bibliography{strings,references}

\appendix
\section{Reducing $k$-star aligned graphs}

In this section, we give further details on how to reduce a $k$-star aligned
graph. We first recall the definition of \emph{proper} and \emph{reduced}
triangulations.
A $k$-star aligned graph $(G, \CurX)$ is a \emph{proper $k$-aligned
triangulation} if each inner face is a triangle, the boundary of the outer face
is a $2k$-cycle of $2$-anchored edges, the outer face does not contain the
origin and there is a degree-$2k$ vertex $o$ on the origin incident to $2k$
aligned edges.  We refer to a reduced proper ccw-aligned triangulation as a
\emph{reduced aligned triangulation} if it does neither contain
\begin{inparaenum}[(i)]
	\item separating triangles, nor 
	\item free edges, nor
	\item aligned edges that are not incident to the origin $\mathcal O$.)
\end{inparaenum}

\begin{figure}
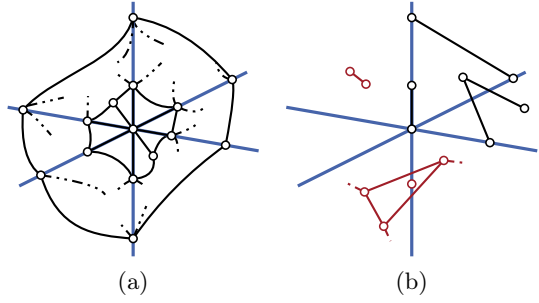

	\centering
	\subfloat[]{\includegraphics[page=3]{figures/k_star}}
	\quad
	\subfloat[]{\includegraphics[page=4]{figures/k_star}}
	\quad
	\caption{(a)~Illustration of the key properties of a proper $k$-star aligned
	graph.(b)~Examples of allowed (black) edges in a reduced instance and
forbidden (red) edges.}
\end{figure}

\begin{figure}
		\centering
		\subfloat[\label{fig:tri:sep_edge}]{
			\includegraphics[page=5]{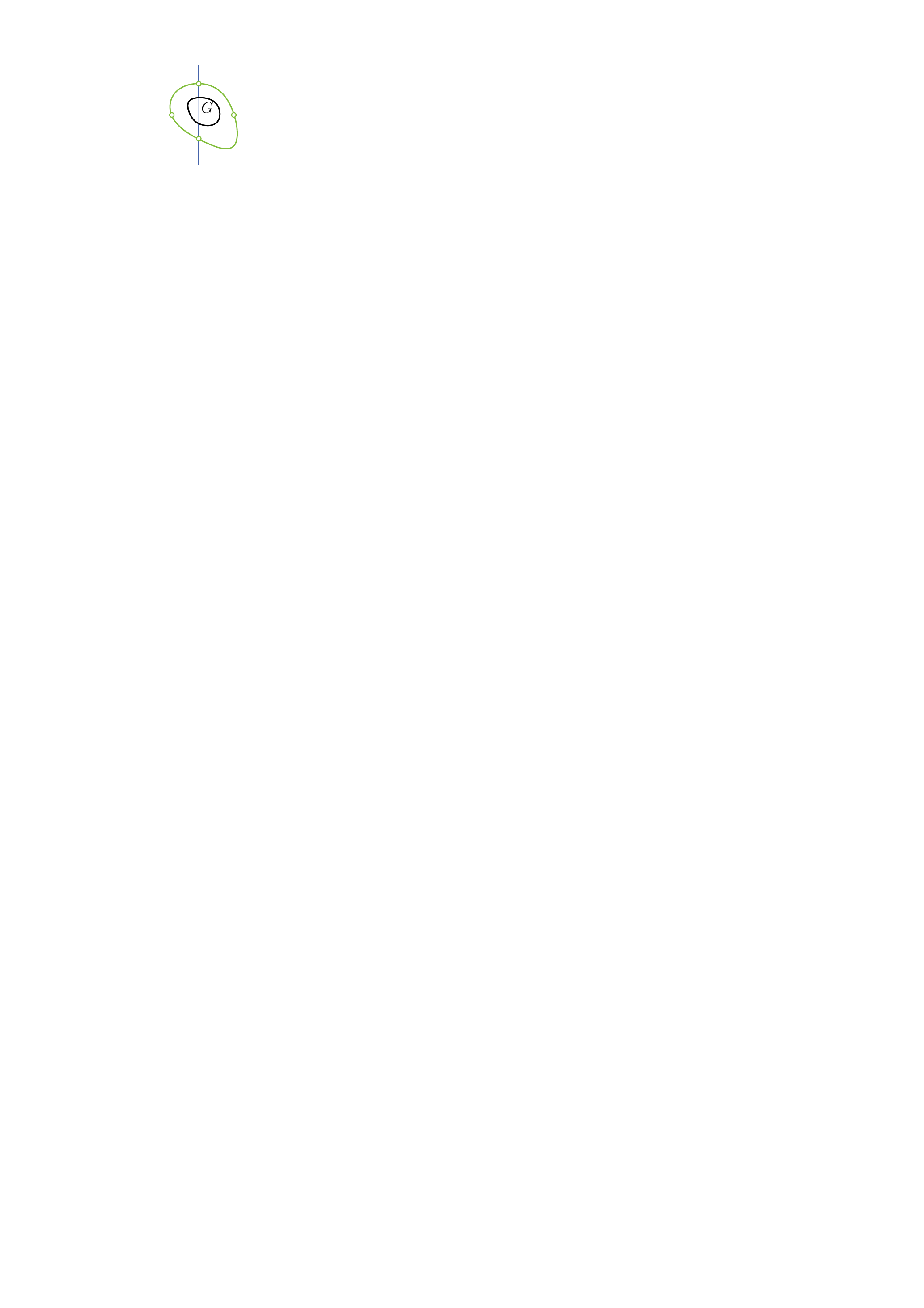}
		
		}
		\quad
		\subfloat[]{
			\includegraphics[page=6]{figures/triangulation.pdf}
		
		}
		\quad
		\subfloat[\label{fig:tri:part}]{
			\includegraphics[page=7]{figures/triangulation.pdf}
		}

		\caption{(a)~The (black) separating edges are isolated by the green edges. (b) The
		black edges are removed and the red edges are obtained by the triangulation.
		(c) Final graph, after removing edges in the interior of a quadrangle $u,
		w_1, v, w_2$ and reinserting the black edges.}
		\label{fig:triangulation}
\end{figure}

Mchedldize et al. proved the following triangulation lemma.

\begin{lemma}
	\label{lemma:old_tri}
	For every aligned graph $(G, \CurX)$ of alignment complexity $(1,0,\sentinel)$
	there is an aligned triangulation $(G', \CurX)$ of alignment complexity
	$(1,0,\sentinel)$ such that $G$ is a subgraph of $G'$.
\end{lemma}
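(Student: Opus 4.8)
The plan is to add edges and dummy (free) vertices to $G$, face by face, so that every inner face becomes a triangle while every newly created edge is free, $1$-anchored $0$-crossed, or $0$-anchored $1$-crossed; since these are exactly the edge types permitted by alignment complexity $(1,0,\sentinel)$, the resulting graph $G'$ will then be an aligned triangulation of the required complexity that contains $G$ as a subgraph. Before triangulating, I would first carry out the standard preparatory steps: add edges inside faces to make $G$ connected and then $2$-connected, routing each such edge within a single cell of a single face so that it is free or $1$-anchored $0$-crossed; and assume that the origin lies in the outer face, so that inside every inner face the pieces of the pseudolines of $\CurX$ are pairwise non-crossing arcs, each joining two boundary points of the face.

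Now fix an inner face $f$. Its pseudoline arcs are non-crossing chords that cut $f$ into \emph{cell-pieces} $f_1,\dots,f_t$, each contained in a single cell $\CurQ_{i(j)}$ and joined to an adjacent cell-piece across exactly one such chord; the adjacency graph of the $f_j$ is a tree. The key point is that each chord lies in the interior of a $1$-crossed edge of $\partial f$, and by definition the endpoints of a $1$-crossed edge are free. I would place one dummy free vertex $d_j$ in the interior of each $f_j$ and add (i) an edge from $d_j$ to every vertex of $G$ on $\partial f\cap\overline{f_j}$, which stays inside $\CurQ_{i(j)}$ and hence is free or $1$-anchored $0$-crossed; and (ii) an edge $d_jd_{j'}$ for every pair of adjacent cell-pieces, routed through the single chord between them, hence $0$-anchored and exactly $1$-crossed. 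After adding these edges, the non-triangular faces remaining inside $f$ are, for each chord, one region bounded by the two incident $1$-crossed edges, the two edges from their (free) endpoints to $d_j$, the two edges from their (free) endpoints to $d_{j'}$, and the edge $d_jd_{j'}$; all corners of this region are free vertices, so it can be triangulated by arbitrary diagonals, each of which is again free or $0$-anchored $1$-crossed. Repeating this for every inner face yields $G'$.

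The main obstacle, and the part that will need care, is to verify that this construction is legitimate: one must check that the dummy vertices can be placed generically, never on a pseudoline, so that they do not become anchored and the added edges do not become aligned; that every added edge crosses at most the single pseudoline we claim, which follows because it can be confined to $\overline{f_j}$ or to the union of two cell-pieces and their common chord; that the combinatorial embedding of $G$ and all incidences of the original vertices and edges with the pseudolines are left untouched, since we only subdivide faces of $G$; and that $G'$ is planar, which is immediate because all new edges lie inside faces of $G$. I would also check the degenerate cases separately — a face lying entirely in one cell (fan it from a single dummy), a face whose boundary meets a pseudoline only at anchored vertices, and, should a vertex on the origin nevertheless be wanted, the insertion of that vertex together with its $2k$ aligned spokes, which are aligned edges incident to the origin and therefore admissible in an aligned triangulation. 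Once these checks are in place, $G\subseteq G'$ holds by construction and the lemma follows.
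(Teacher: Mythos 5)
First, a point of reference: the paper does not prove this lemma at all --- it is quoted from Mchedlidze et al.\ (``Mchedldize et al.\ proved the following triangulation lemma''), so there is no in-paper argument to compare yours against. Your construction --- one dummy vertex per cell-piece of each face, a star from that dummy to the boundary vertices of its cell-piece, a $0$-anchored $1$-crossed edge between dummies of adjacent cell-pieces, and diagonals in the leftover regions around each chord --- is essentially the standard construction from that reference, and for faces not containing the origin it is sound. The ``degenerate'' case you defer, a chord ending at an anchored vertex $v$ rather than in the interior of a $1$-crossed edge, does work out: the leftover region at such an endpoint is already the triangle $d_j v d_{j'}$, so nothing extra is needed there.

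The genuine gap is your blanket assumption that the origin lies in the outer face. For a $k$-star aligned graph all pseudolines of $\CurX$ meet at $\mathcal O$, and $\mathcal O$ may lie in an inner face of $G$; indeed, when Lemma~\ref{lemma:old_tri} is invoked in the proof of Lemma~\ref{lemma:tri}, a $2k$-cycle has just been added around the whole graph precisely so that the outer face does \emph{not} contain the origin, so the case you exclude is exactly the one that occurs. In the face containing $\mathcal O$ the chords are not pairwise non-crossing, the adjacency graph of the cell-pieces is not a tree (it contains a $2k$-cycle around $\mathcal O$), and your final triangulation step fails: after adding the dummy--dummy edges, the central region is a $2k$-gon on the dummies $d_1,\dots,d_{2k}$ sitting in the $2k$ cells around the origin, and for $k\ge 2$ every diagonal of that polygon is at least $2$-crossed, which is forbidden at complexity $(1,0,\sentinel)$. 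The repair is to place a vertex at $\mathcal O$ and join it to the $2k$ surrounding dummies by $1$-anchored $0$-crossed edges (your parenthetical ``aligned spokes'' would instead require inserting extra aligned vertices on the pseudoline arcs), and to treat separately the cases where $\mathcal O$ lies on an edge or at a vertex of $G$; none of this is actually carried out. A smaller but real omission of the same kind: your preliminary step of making $G$ connected and $2$-connected needs an argument that the augmenting edges can always be routed within complexity $(1,0,\sentinel)$, which is not obvious when the two vertices to be joined lie in different cells of their common face.
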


Since ccw-aligned graphs contain $2$-anchored and $1$-anchored $1$-crossed
edges, we can not immediately apply this lemma. In the following, we show that
our instances can be modified such that the can use the previous lemma. For
simplicity, we assume that there is no aligned that crosses the origin. 

\begin{lemma} \label{lemma:tri}
	Let $(G, \CurX)$ be a ccw-aligned graph. Then there is a ccw-aligned
	triangulation $(G', \CurX)$ that contains $(G, \CurX)$ as a subgraph.
	Moreover, the outer face of $(G', \CurX)$ is bounded by $2k$-cycle $C$ of
	$2$-anchored edges and the outer face does not contain the origin in its
	interior.
\end{lemma}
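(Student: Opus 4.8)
The plan is to make $(G,\CurX)$ amenable to the triangulation lemma of Mchedlidze et al.\ (Lemma~\ref{lemma:old_tri}) by temporarily deleting exactly the edges that obstruct it --- the \emph{separating} ($2$-anchored and $1$-anchored $1$-crossed) edges --- triangulating, and then restoring them; this is the scheme sketched in Fig.~\ref{fig:triangulation}. First I would fix the frame: add a vertex $o$ at the origin with $2k$ aligned edges running along the $2k$ half-pseudolines to fresh nearby vertices, add a $2k$-cycle $C$ whose $j$-th edge is a $2$-anchored edge lying in cell $\CurQ_j$, and place a homeomorphic copy of $G$ in the region between the $o$-gadget and $C$. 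The hypothesis that no aligned edge crosses the origin guarantees that $o$ is well defined; if $\CurO$ lies on a vertex or in a face of $G$ one first adjusts $G$ locally so that $\CurO$ sits in its own empty face, into which the $o$-gadget is inserted. Every edge introduced in this step is aligned (incident to $o$), $2$-anchored (on $C$), or an edge of the annular region between $G$ and $C$ (resp.\ between $G$ and the $o$-gadget) that still has to be filled in.

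Second, for every separating edge $e=uv$ of the current graph I would insert, into each of the two faces incident to $e$, a new free vertex ($w_1$ and $w_2$), placed so close to $e$ that the four new edges $uw_1,w_1v,vw_2,w_2u$ stay inside the at most two cells met by $e$ and avoid every pseudoline. Then $e$ becomes the unique diagonal of a quadrilateral region $R_e$ bounded by $u,w_1,v,w_2$ that contains no other vertex or edge, and each of the four new edges is free, one-sided anchored, or $0$-anchored $1$-crossed --- hence non-separating; such edges are permitted in an aligned graph of complexity $(1,0,\sentinel)$ and cannot violate the ccw orientation condition, which constrains only $1$-anchored $1$-crossed edges. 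Now delete all separating edges except those of $C$. The remaining graph has no $2$-anchored and no $1$-anchored $1$-crossed edge in its interior (the $2k$ aligned edges at $o$ are kept throughout), so Lemma~\ref{lemma:old_tri}, applied to the part enclosed by $C$ with $C$ kept as the outer boundary, yields a triangulation in which every inner face is a triangle and each $R_e$, being a quadrilateral face, receives a single diagonal; since $e$ is separating that diagonal is $w_1w_2$, not $e$. Finally, in each $R_e$ delete $w_1w_2$ and reinsert $e$, splitting $R_e$ into the two triangles $uw_1v$ and $uw_2v$. The resulting $(G',\CurX)$ then has all inner faces triangular, contains $(G,\CurX)$ as a subgraph, has $C$ as its outer face, and does not contain $\CurO$ in the outer face since $C$ was drawn far from the origin. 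It is ccw-aligned because every edge of $G'$ is either an original edge of $G$, a non-separating edge added above, or an edge of the complexity-$(1,0,\sentinel)$ triangulation; in particular the only $1$-anchored $1$-crossed edges of $G'$ are those already present in $G$, whose sources are aligned, so the orientation condition is inherited.

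The main obstacle is the bookkeeping around the frame and the triangulation lemma: one has to arrange that deleting the interior separating edges really lands in the hypothesis of Lemma~\ref{lemma:old_tri} while the $2k$-cycle $C$ of $2$-anchored edges survives as the outer boundary instead of being triangulated down to a triangle, and one has to fill the annulus between $C$ and $G$ (and between the $o$-gadget and $G$) by the same isolate--delete--triangulate--restore procedure without ever producing a $1$-anchored $1$-crossed edge with a free source. The second, more local, difficulty is verifying that each $R_e$ is genuinely an isolated face --- that no triangulation edge is routed through it --- so that swapping its diagonal is harmless; this is precisely what placing $w_1$ and $w_2$ close enough to $e$ buys us.
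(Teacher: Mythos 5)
Your proposal matches the paper's proof essentially step for step: surround each separating edge $uv$ with a quadrilateral $u,w_1,v,w_2$ of non-separating edges, delete $uv$ to reach alignment complexity $(1,0,\sentinel)$, triangulate via Lemma~\ref{lemma:old_tri}, then clear the interior of each quadrangle and reinsert $uv$, with the $2k$-cycle $C$ added as the new outer boundary. The extra care you take with the origin gadget and with verifying that the reinserted and new edges cannot violate the ccw condition is consistent with (and slightly more explicit than) the paper's argument.
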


\begin{proof}
 
	Let $(G_2, \CurX)$ be the graph that is constructed from $(G, \CurX)$ as
	follows. First, add a $2k$-cycle~$C$ of $2$-anchored edges in the outer face
	such that the new outer face does not contain the origin. 

	For each separating edge $uv$ of $G$ add two vertices $w_1, w_2$ and the edges
	$uw_1, w_1v$ and $uw_2, w_2v$. Route and direct the edges according to
	Figure~\ref{fig:tri:sep_edge}.  Finally, remove the edge $uv$. Eventually, we
	arrive at an aligned graph of alignment complexity  $(1,0, \sentinel)$.
	With the application of Lemma~\ref{lemma:old_tri} we obtain a
	triangulated aligned graph $(G_3, \CurX)$ of alignment complexity $(1, 0,
	\sentinel)$.  We remove edges in the interior of each quadrangle $u, w_1, v,
	w_2$ and reinserted the original edge $uv$. Finally, we remove all edges and
	vertices in the region bounded by $C$ that does not contain the origin. 	
	This
	yields the desired aligned graph $(G', \CurX)$.
\end{proof}

Since no free edge of an ccw-aligned graph is incident to a triangle that
contains the intersection in its interior, the next lemma follows from the
results of Mchedlitze et al.

\begin{lemma}\label{lemma:contraction}
	Let $(G, \CurX)$ be a ccw-aligned graph and let $e$ be an interior free edge
	or an aligned edge that is neither an edge of a separating nor a chord and does not
	contains the origin, then $(G/e, \CurX)$ is a ccw-aligned graph and $(G,
	\CurX)$ has an aligned drawing if $(G/e, \CurX)$ has an aligned drawing.
\end{lemma}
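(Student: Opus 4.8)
\textbf{Proof proposal for Lemma~\ref{lemma:contraction}.}
The plan is to reduce this to the analogous contraction statement already established by Mchedlidze et al.\ for aligned graphs of alignment complexity $(1,0,\sentinel)$, after checking that contracting $e$ does not create any of the forbidden edge types that would take us outside the class of ccw-aligned graphs. First I would fix notation: write $e=ab$ and let $G'=G/e$, with the contracted vertex $v_e$. Since $e$ is either an interior free edge or an aligned edge that is not a chord, is not an edge of a separating triangle, and does not contain the origin, the standard argument shows $G'$ is still a simple planar triangulation with the same outer face and the same degree-$2k$ origin vertex $o$; the hypotheses on separating triangles and chords are exactly what guarantees simplicity and that no face other than the two incident to $e$ changes. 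The key point specific to our setting is the alignment complexity: I would argue case by case on whether $e$ is free or aligned, and on the anchoring/crossing type of each edge $f$ incident to $a$ or $b$. Contracting a free edge $e$ keeps the cell of $v_e$ equal to the common cell of $a$ and $b$, so a free edge stays free and a $1$-crossed edge stays $1$-crossed; an edge with an endpoint $a$ on a pseudoline cannot occur because $e$ is free, so no anchoring counts change. Contracting an aligned edge $e$ (lying on $\CurX_i$, not through the origin) merges $a$ and $b$ into an aligned vertex $v_e$ on $\CurX_i$: here an edge $f$ incident to $a$ or $b$ that was $1$-anchored becomes incident to $v_e\in\CurX_i$, so it stays $1$-anchored, and an edge $f$ that was $2$-anchored with its other endpoint on $\CurX_j$ stays $2$-anchored; crucially, no edge $f$ incident to $a$ or $b$ can cross $\CurX_i$ in its interior because $a,b\in\CurX_i$, and no new crossings with other pseudolines are introduced by the contraction, so crossing numbers are preserved. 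Hence $(G/e,\CurX)$ again has alignment complexity $(1,1,0)$.

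The more delicate obligation is the orientation condition in the definition of ccw-aligned graphs: $(G/e,\CurX)$ must contain no $1$-anchored $1$-crossed edge with a free source under the canonical orientation. Here I would use that the orientation of a non-aligned edge $uv$ is determined intrinsically (by the requirement that $\mathcal C_{uv}$ has the origin to its left), so the orientation of an edge $f$ not incident to $a$ or $b$ is unchanged, and for an edge $f$ incident to $a$ or $b$ its orientation in $G/e$ agrees with its orientation in $G$ after identifying the endpoint. If $e$ is free, then a $1$-anchored $1$-crossed edge $f$ incident to $v_e$ had the free vertex (inside $a,b$'s cell) as one endpoint already in $G$; since $(G,\CurX)$ is ccw-aligned, the free endpoint was the target, and this is preserved. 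If $e$ is aligned, then $v_e$ is aligned, so any $1$-anchored $1$-crossed edge at $v_e$ has $v_e$ as its anchored endpoint and the free vertex as the other endpoint, and again the ccw-aligned property of $G$ forces that free vertex to be the target, which the contraction preserves. So $(G/e,\CurX)$ is ccw-aligned.

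Finally, for the drawing implication: given an aligned drawing $(\Gamma',X)$ of $(G/e,\CurX)$, I would produce an aligned drawing of $(G,\CurX)$ by ``splitting'' $v_e$ back into $a$ and $b$, placing them at two nearby points within the face cluster formerly occupied by $v_e$. If $e$ is free, both $a$ and $b$ go into a small disk around the image of $v_e$ that lies entirely inside the cell of $v_e$; since $e$ is not in a separating triangle, the neighbors of $v_e$ in $\Gamma'$ can be split between $a$ and $b$ so that all incident edges remain straight and uncrossed (this is the classical vertex-split-in-a-small-disk argument, valid because the link of $v_e$ is a simple cycle). If $e$ is aligned on $X_i$, the disk is chosen to straddle $X_i$ so that $a$ and $b$ land on $X_i$ in the correct order, again keeping all incident edges straight and planar; because $e$ is not a chord and not an edge of a separating triangle, no cell-order or pseudoline-order constraint is violated by the local perturbation. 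This yields an aligned drawing of $(G,\CurX)$, completing the proof. The main obstacle is the orientation-preservation bookkeeping in the second paragraph: one must be careful that splitting off the endpoint of $e$ does not silently turn a target into a source for any incident $1$-anchored $1$-crossed edge, which is why the argument hinges on the ccw-property of the original graph together with the fact that contraction (and the inverse split) moves endpoints only within a single cell or along a single pseudoline.
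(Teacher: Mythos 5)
Your proposal is correct and follows essentially the approach the paper intends: the paper itself gives no detailed argument for this lemma, merely asserting that it ``follows from the results of Mchedlidze et al.'' once one checks that no free edge of a ccw-aligned graph is incident to a triangle containing the origin, and your case analysis (contraction preserves alignment complexity $(1,1,0)$ and the no-free-source condition, followed by the standard small-disk vertex split to recover the drawing) is exactly the bookkeeping that citation leaves implicit. Your write-up is in fact more complete than the paper's one-sentence justification.
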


Thus, we can now prove the main reduction lemma and therefore
Lemma~\ref{lemma:reduced}.

\begin{lemma}
	\label{lemma:reduction}
	For every ccw-aligned graph $(G, \CurX)$ there is a reduced aligned triangulation
	$(G_R,
	\CurX)$  such that $(G, \CurX)$ has an aligned drawing if $(G_R, \CurX)$ has
	an aligned drawing.
\end{lemma}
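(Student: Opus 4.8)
\textbf{Proof plan for Lemma~\ref{lemma:reduction}.}
The plan is to assemble the reduction from the preparatory lemmas already available in the excerpt, peeling off the forbidden structures one at a time and always arguing that an aligned drawing of the smaller instance lifts to an aligned drawing of the larger one. First I would triangulate: starting from a ccw-aligned graph $(G,\CurX)$, apply Lemma~\ref{lemma:tri} to obtain a ccw-aligned triangulation $(G', \CurX)$ containing $(G,\CurX)$ as a subgraph, whose outer boundary is a $2k$-cycle of $2$-anchored edges not enclosing the origin. Since deleting edges and vertices never destroys an aligned drawing, it suffices to draw $(G',\CurX)$. At this point we may also assume, after the simplifying remark in the appendix, that no aligned edge crosses the origin, and we add (if not already present) the degree-$2k$ vertex $o$ on the origin with its $2k$ incident aligned edges, so that $(G',\CurX)$ is a \emph{proper} ccw-aligned triangulation.

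Next I would eliminate the three forbidden features in the definition of \emph{reduced}, in an order chosen so that removing one does not reintroduce another. Separating triangles are handled first: a separating triangle $T$ in a triangulation partitions the vertex set, and one replaces the interior of $T$ by a suitable drawing obtained recursively (or simply deletes the interior and argues that an aligned drawing of the outer part can be completed inside the convex-position region bounded by the drawn triangle), exactly as in Mchedlidze et al.; each such step strictly decreases the number of vertices, so the process terminates. Free edges and non-origin aligned edges are then removed by repeated contraction: Lemma~\ref{lemma:contraction} says that contracting an interior free edge, or an aligned edge that is neither part of a separating pair nor a chord and does not contain the origin, yields again a ccw-aligned graph whose aligned drawability implies that of the original. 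After each contraction one must re-triangulate (the contraction may create multi-edges or non-triangular faces), invoking Lemma~\ref{lemma:tri} again; I would argue that re-triangulation together with these contractions can be interleaved so that the total count of (free edges)\,$+$\,(non-origin aligned edges)\,$+$\,(separating triangles) strictly decreases, giving a terminating induction whose fixed point is precisely a reduced aligned triangulation $(G_R,\CurX)$.

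The main obstacle I expect is bookkeeping the \emph{interplay} between re-triangulation and the ccw-orientation constraint: when Lemma~\ref{lemma:tri} inserts new edges to restore a triangulation after a contraction or after isolating a separating triangle, one has to check that none of the inserted edges is a $1$-anchored $1$-crossed edge with a free source vertex — otherwise ccw-alignedness is lost and Lemma~\ref{lemma:contraction} no longer applies on the next round. The way around this is the same device used in the proof of Lemma~\ref{lemma:tri}: new edges introduced near a separating edge are routed as $2$-anchored (or as $1$-anchored with a free \emph{target}, not source), and one verifies case-by-case that the triangulation lemma of Mchedlidze et al., applied to the auxiliary alignment complexity $(1,0,\sentinel)$ instance, can only add edges whose re-interpretation in the ccw setting stays within the allowed types. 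Once this is established, chaining the implications ``$(G,\CurX)$ drawable $\Leftarrow$ $(G',\CurX)$ drawable $\Leftarrow\dots\Leftarrow$ $(G_R,\CurX)$ drawable'' along the whole reduction sequence yields the lemma, and composing it with Lemma~\ref{lemma:base_case:drawing} gives Lemma~\ref{lemma:reduced}.
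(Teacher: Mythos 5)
Your outline matches the paper's proof in its main structure: triangulate with Lemma~\ref{lemma:tri}, then invoke the machinery of Mchedlidze et al.\ (splitting at separating triangles, contracting free edges and non-origin aligned edges via Lemma~\ref{lemma:contraction}) to reach a reduced triangulation, chaining the ``drawable if'' implications along the way. Two remarks on the mechanics: contracting an edge of a triangulation yields again a triangulation (the two incident triangular faces collapse and the resulting parallel edges are identified), so the re-triangulation rounds you interleave are not needed, and the termination argument is simply the decreasing vertex count; your worry about Lemma~\ref{lemma:tri} reintroducing forbidden edge types therefore does not arise after the initial triangulation step.

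The genuine gap is your treatment of the origin. A \emph{reduced aligned triangulation} must be \emph{proper}, i.e.\ contain a degree-$2k$ vertex $o$ on the origin incident to $2k$ aligned edges, and you dispose of this with ``we add (if not already present) the degree-$2k$ vertex $o$ \ldots'' at the start. That is not a legitimate step: you cannot insert a new vertex and $2k$ aligned edges into an already triangulated graph without destroying planarity or the triangulation, and even if you could, the subsequent contractions and splittings would have to be shown to preserve this local structure at $o$. The paper instead performs this normalization \emph{after} the reduction: if an aligned edge passes through the origin it is subdivided there and local edges are inserted; if a vertex already sits on the origin, local reductions around it are applied exhaustively until it has exactly $2k$ incident aligned edges; and the lifting of the drawing back through this step is justified by observing that the polygon around $o$ in the reduced drawing is star-shaped with $o$ in its kernel, so the removed neighbourhood can be redrawn. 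Without some such argument your reduction does not actually land in the class of reduced aligned triangulations to which Lemma~\ref{lemma:base_case:drawing} applies.
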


\begin{proof}
	\begin{figure}
		\centering
		\subfloat[\label{fig:origin:edge}]{
			\includegraphics[page=1]{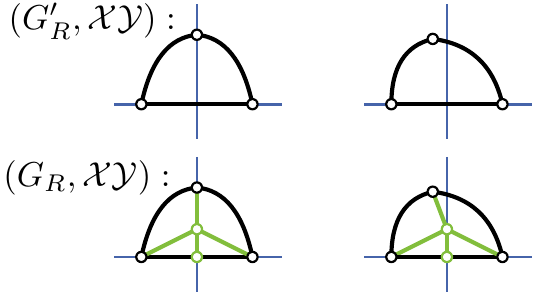}
		}
		\quad
		\subfloat[\label{fig:origin:vertex}]{
			\includegraphics[page=2]{figures/intersection.pdf}
		}
		\caption{Red edges are removed from $(G_T, \CurX)$ and green added
		to $(G_P, \CurX)$}
		\label{fig:origin}
	\end{figure}

	By Lemma~\ref{lemma:tri} there is a aligned triangulation $(G_T, \CurX)$ of
	$(G, \CurX)$ with the outer face bounded by $2k$-cycle of $2$-anchored edges.
	Moreover, an aligned drawing of $(G_T, \CurX)$ contains an aligned drawing of
	$(G, \CurX)$.
	
	By Mchedlidze et al. we obtain a reduced aligned
	triangulation $(G_R', \CurX)$ from $(G_T, \CurX)$ by either splitting $(G_T,
	\CurX)$ into two aligned graphs at a separating triangle $T$, or by
	contracting free or aligned edges that are not incident to $o$ (Lemma~\ref{lemma:contraction}). 
	Moreover, we have that that $(G_T, \CurX)$ has an
	aligned drawing if $(G_R', \CurX)$ has an aligned drawing 

	In order to obtain a proper aligned triangulation $(G_R, \CurX)$ from $(G'_R,
	\CurX)$ we perform the reduction depicted in Figure~\ref{fig:origin}.  If
	there is an aligned edge that contain the origin in its interior,  we place a
	subdivision vertex on this edge and inserted edges as depicted in
	Figure~\ref{fig:origin:edge}.  Note that in this case an aligned drawing of
	$(G_R, \CurX)$ contains an aligned drawing of $(G_R', \CurX)$.

	Consider the case that there is a vertex $v$ on the origin that is incident to
	a free vertex $u$. We obtain a new aligned graph $(G_R, \CurX)$ by
	exhaustively applying the reductions depicted in
	Figure~\ref{fig:origin:vertex}.
	Since the black polygon (compare Figure~\ref{fig:origin:vertex}) in an aligned
	drawing of $(G_R, \CurX)$ is star-shaped  and its kernel contains the
	vertex~$v$, $(G_R', \CurX)$ has an aligned drawing if $(G_R', \CurX)$ has an
	aligned drawing. 
\end{proof}

\end{document}